\def\bfK{\mathbf K}
\def\bfP{\mathbf P}
\def\bft{\mathbf t}
\def\ubft{\underline\bft}
\def\bfsigma{{\mbox{\boldmath${\sigma}$}}}
\def\bfpsi{{\mbox{\boldmath${\psi}$}}}
\def\bfphi{{\mbox{\boldmath${\phi}$}}}
\def\bfpi{{\mbox{\boldmath${\pi}$}}}
\def\ubfsigma{\underline{\mbox{\boldmath${\bfsigma}$}}}
\def\bfLambda{{\mbox{\boldmath${\Lambda}$}}}
\def\bfPi{{\mbox{\boldmath${\Pi}$}}}
\def\wt{\widetilde}
\def\wtU{\widetilde U}
\def\wtu{\widetilde u}
\def\diy{\displaystyle}
\newtheorem{theo}{Theorem}[section]
\newtheorem{lm}{Lemma}[section]
\newtheorem{df}{Definition}[section]
\newtheorem{prop}{Proposition}[section]
\newtheorem{remark}{Remark}
\numberwithin{equation}{section}
\begin{document}

\begin{center}

{ \large \bf Bounds on the critical line via transfer matrix methods for an Ising model coupled to causal dynamical triangulations}

\vspace{30pt}

{\sl J.C.~ Hernandez}$\,^{a}$ , {\sl Y.~Suhov}$\,^{a,b,c}$, {\sl A.~Yambartsev}$\,^{a}$ and
{\sl S.~Zohren}$\,^{d,e,a}$

\vspace{24pt}

{\footnotesize

$^a$~Department of Statistics, Institute of Mathematics and Statistics, \\ University of S\~ao Paulo, Rua do Mat\~ao, 1010, S\~ao Paulo, CEP 05508-090, Brazil

\vspace{10pt}

$^b$~DPMMS, University of Cambridge, \\ Wilberforce Road, Cambridge CB3 0WB, UK

\vspace{10pt}

$^c$ IITP RAS, 19 Bol'shoi Karetnyi per., Moscow, 127994 Russia

\vspace{10pt}

$^d$~Physics Department, PUC Rio de Janeiro \\Rua Marqu\^es de S\~ao Vincente 225, G\'avea, Rio de Janeiro, Brazil
\vspace{10pt}

$^e$~Rudolf Peierls Center for Theoretical Physics \& Mansfield College, \\ University of Oxford, 1 Keble Road, OX1 3NP Oxford, UK
}

\vspace{48pt}

\end{center}

\begin{abstract}
We introduce a transfer matrix formalism for the (annealed) Ising model coupled to two-dimensional causal dynamical triangulations. Using the Krein-Rutman theory of positivity preserving operators we study several properties of the emerging transfer matrix. In particular, we determine regions in the quadrant of parameters $\beta ,\mu >0$ where the infinite-volume free energy converges, yielding results on the convergence and asymptotic properties of the
partition function and the Gibbs measure.
\\ \\
\textbf{2000 MSC.} 60F05, 60J60, 60J80.\\
\textbf{Keywords:} causal dynamical triangulation (CDT), Ising model, partition function, Gibbs measure, 
transfer matrix, Krein-Rutman theory
\end{abstract}


\newpage

\section{Introduction. A review of related results}

In the study of two-dimensional quantum gravity and non-critical string theory, models of
discrete random surfaces play an essential role.

In the 1980s, so-called \emph{dynamical triangulations} (DT) were introduced
to define a Euclidean path integral for two-dimensional quantum gravity (see
\cite{Ambjorn:1997di} for an overview). In particular, the partition function has been
determined as a sum over all possible triangulations of a sphere where each configuration
is weighted by a Boltzmann factor $e^{-\mu |T|}$, with $|T|$ standing for the size of
the triangulation and $\mu$ being the cosmological constant. The evaluation of the
partition function was reduced to a purely combinatorial problem that can be solved
with the help of the early work of Tutte \cite{Tutte1962a,Tutte1963}; alternatively,
more powerful techniques were proposed, based on random matrix models (see, e.g.,
\cite{DiFrancesco:1993nw}) and bijections to well-labelled trees (see \cite{Schaeffer1997,
bouttier-2002-645}). One can then pass to a continuum limit by taking the number of
triangles to infinity. An interesting property of the resulting ``quantum geometry''
is its fractal structure as illustrated in Figure \ref{fig1} (a). In the physical literature
such fractal structures are called ``baby universes'',
and they completely dominate the continuum limit leading to a fractal dimension $d=4$, 
where the fractal dimension is defined through the behaviour of the number of triangles or the number of vertices within a given graph distance $R$ from a chosen vertex, $B(R)$, through $B(R)\sim R^d$ as $R\to\infty$.

From a probabilistic point of view there has recently been an increasing interest in DT, most notably
through the work of Angel and Schramm on a uniform measure on infinite planar triangulations
\cite{Angel:2002ta}, as well as through the work of Le Gall, Miermont and collaborators on
Brownian maps (see \cite{Clay-Le-Gall} for a recent review).

From a physical point of view it is interesting to study various models of matter, such as the
Ising model, coupled to the DT. The calculation of the partition function in this case also
reduces to a combinatorial problem. It was first solved in 
\cite{Kazakov:1986hu,Boulatov:1986sb} by using random matrix models and later by using
a bijection to well-labelled trees \cite{Schaeffer-Ising}. It is interesting that the
solution here is much simpler than in the case of a flat triangular or square lattice as
given by Onsager \cite{Onsager}. Further, one can see that the critical exponents in the case where the model
is coupled to DT differ
from the Onsager values. This is related to the strong back-reaction of the Ising model
with the quantum geometry. In particular, the spin clusters energetically prefer to sit
within baby universes since those are connected to the main universe through a very short
so-called bottleneck boundary (see Figure \ref{fig1} (b)). The spins increase the
fractal structure leading to a change in the values of the critical exponents at the
critical temperature.

\begin{figure}[t]
\begin{center}
\includegraphics[width=14cm]{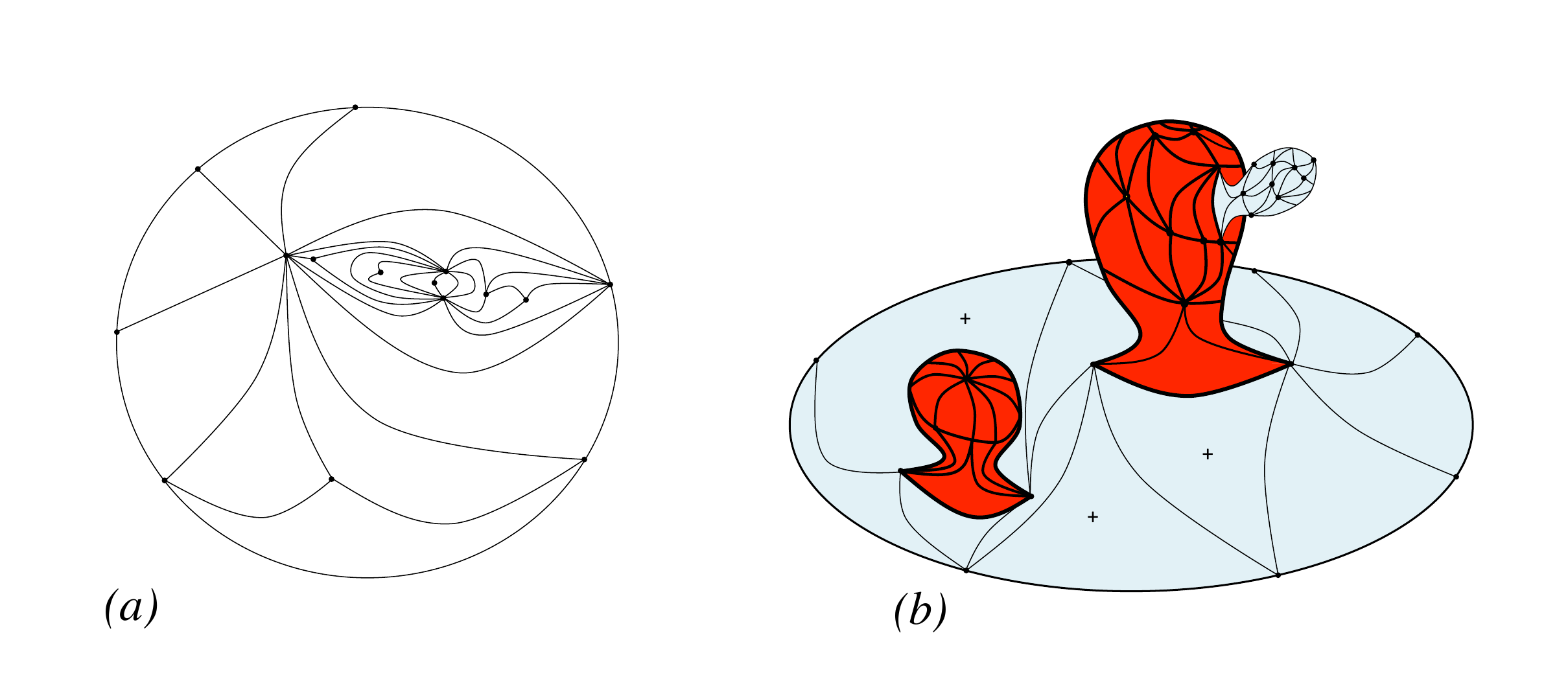}
\end{center}
\caption{(a) A section of a typical planar triangulation of the DT ensemble illustrating
the fractal structure of the quantum geometry. (b) Illustration of the baby
universes and
the formation of spin clusters within them. Each baby universe corresponds
to a fractal structure as in part (a) drawn out of the plane.}%
\label{fig1}
\end{figure}

In a continuum framework one attempts to understand the resulting theory as a Liouville
theory coupled to a conformal field theory with central charge $c=1/2$. Furthermore, this
leads to a simple algebraic identity (the KPZ-relation) between the critical exponents
of the Ising model on a flat lattice and the critical exponents of the Ising model coupled
to DT \cite{Knizhnik:1988ak}.

While DT has a very rich mathematical structure which very recently has been related to the SLE
(the Schramm-Loewner evolution) and level curves of a Gaussian free field \cite{Duplantier:2010yw},
from the point of view of quantum gravity its fractal structure leads to causality-violating
geometries that are arguably non-physical. This led to the development of so-called
\emph{causal dynamical triangulations} (CDT) by Ambj{\o}rn and Loll \cite{Ambjorn:1998xu},
to define the \emph{Lorentzian} gravitational path integral. A causal triangulation is
formed by triangulations of spatial strips as illustrated in Figure \ref{fig2}. Note that
the left and right boundaries of the spatial strip are periodically identified.

\begin{figure}[t]
\begin{center}
\includegraphics[width=12cm]{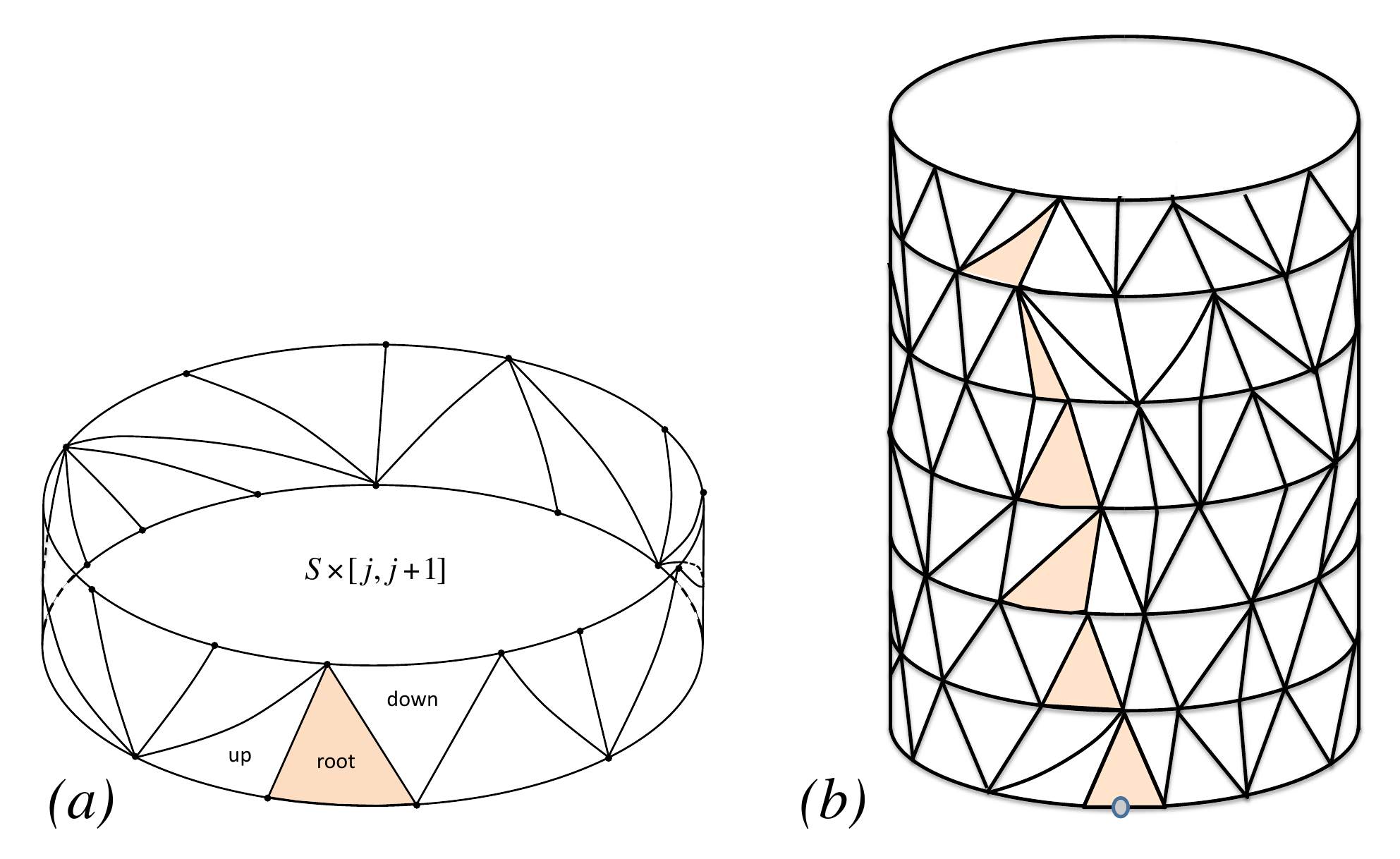}
\end{center}
\caption{(a) A strip of a causal triangulation of ${\mathcal S}\times [j,j+1]$. (b) A a causal triangulation of ${\mathcal S}\times [j,j+6]$ composed of six strips. The rooted edge on the lower boundary induces a sequence of root triangles in each strip.}%
\label{fig2}
\end{figure}

A first analytical solution of two-dimensional (pure) CDT was obtained in
\cite{Ambjorn:1998xu} where it was shown that the resulting quantum geometry, while still
random, is much more regular than in the case of a DT, leading to a fractal dimension of
$d=2$. From a probabilistic point of view, we would like to note that a uniform measure
on infinite causal triangulations UICT has been recently introduced in
\cite{Durhuus:2009sm,SYZ1}.

An interesting question is: What are the properties of the Ising model coupled to a CDT ensemble? As
was said above, the CDT ensemble is more regular than that of the DT, but it is still random
and allows for a back-reaction of the spin system with the quantum geometry. Monte Carlo
simulations \cite{Ambjorn:1999gi} (see also \cite{Benedetti:2006rv,Ambjorn:2008jg}) give
a strong evidence that critical exponents of the Ising model coupled to CDT are identical
to the Onsager values.

While recently much progress has been made in the development of analytical techniques for CDT \cite{Ambjorn:2007jm,Ambjorn:2008ta}, particularly random matrix models
\cite{Ambjorn:2008jf,CDTmatrix2,Ambjorn:2008gk}, and their application to multi-critical CDT \cite{multi,dimer,multi2}, 
the causality constraints still makes it
difficult to find an analytical solution of the Ising model coupled to CDT. For the
quenched Ising model coupled to two-dimensional CDT some progress has been made in proving
the existence of a phase transition \cite{anatoli}.

In this article we develop a transfer matrix formalism for the annealed Ising model coupled
to two-dimensional CDT. Spectral properties of the transfer matrix are rigorously analysed
by using the Krein-Rutman theorem \cite{Krein-Rutman} on operators preserving the cone
of positive functions. This yields results on convergence and asymptotic properties of the
partition function and the Gibbs measure and allows us to determine regions in the 
parameter quarter-plane where the partition function converges. 

\emph{Outline:} The article is organised as follows. Section \ref{sec2} contains basic definitions. Section  \ref{sec3} gives
a summary of the transfer-matrix formalism for CDTs. In Section  \ref{sec4} we introduce the
transfer matrix for the Ising model coupled with a CDT and state and comment on
our main results. This is followed with concluding remarks in Section \ref{sec5}. An Appendix contains 
the proofs of the stated results.  

\section{Definitions}\label{sec2}

We will work with rooted causal dynamic triangulations of the cylinder 
$C_N = {\mathcal S}\times [0,N]$, $N = 1, 2, \dots$, which 
have $N$ bonds (strips) ${\mathcal S}\times [j,j+1]$.  Here ${\mathcal S}$ stands for
a unit circle.  The definition of a causal triangulation starts by considering  
a connected graph $G$ embedded in $C_N$ with the property that all faces of $G$
are triangles (using the convention that an edge incident to the same face on two
sides counts twice, see \cite{SYZ1} for more details). A triangulation $\ubft$ of $C_N$ is 
a pair formed by a graph $G$ with the above property and the set $F$ of all
its (triangular) faces: $\ubft = (G, F)$.

\begin{df} \label{defact}
A triangulation $\ubft$ of $C_N$ is called a {\it causal triangulation} (CT) if the 
following conditions hold:
\begin{itemize}
\item each triangular face of $\ubft$ belongs to some strip $\mathcal S \times [j, j + 1]$, $j =
1, \dots, N-1$, and has all vertices and exactly one edge on the boundary
$(\mathcal S \times \{j\}) \cup (\mathcal S\times \{j+1\})$ of the strip $\mathcal S\times [j, j + 1]$;

\item if $k_j = k_j(\ubft)$ is the number of edges on $\mathcal S \times \{j\}$, then we have
$0 < k_j < \infty$ for all $j = 0, 1, \dots, N-1$.
\end{itemize}
\end{df}

\begin{df}\label{defroot}
A triangulation $\ubft$ of $C_N$ is called rooted if it has a root. The root in the 
triangulation $\;\ubft\;$ is represented by a triangular face $t$ of $\;\ubft$, called the root
triangle, with an anticlock-wise  ordering on its vertices $(x, y, z)$ where
$x$ and $y$  belong to $S^1\times \{0\}$. The vertex $x$ is identified as the root
vertex and the (directed) edge from $x$ to $y$ as the root edge. 
\end{df}

\begin{df}
Two causal rooted triangulations of $C_N$, say $\;\ubft = (G, F)$ and
$\ubft^\prime = (G^\prime, F^\prime)$, are equivalent if there exists a self-homeomorphism of $C_N$ 
which (i) transforms each slice $S^1\times \{j\}$, $j = 0,\dots, N-1$ to itself and preserves
its direction, (ii) induces an isomorphism of the graphs $G$ and $G^\prime$ and a bijection
between $F$ and $F^\prime$, and (iii) takes the root of $\ubft$ to the root of $\ubft^\prime$.
\end{df}

A triangulation $\ubft$  of $C_N$ is identified as a consistent sequence
$$\ubft = (\bft(0), \bft(1), \dots, \bft (N-1)),$$ 
where $\bft (i)$ is a causal triangulation of the 
strip ${\mathcal S}\times [i,i+1]$ (see Figure \ref{fig2}). The latter means that each $\bft (i)$ is described by a partition of
${\mathcal S}\times [i,i+1]$ into triangles where each triangle has one vertex on 
one of the slices ${\mathcal S}\times\{i\}$, ${\mathcal S}\times\{i+1\}$ and two
on the other, together with the edge joining these two vertices. The property of 
consistency means that each pair $(\bft(i), \bft(i+1))$ is consistent, i.e., every side of a triangle from 
$\bft(i)$ lying in ${\mathcal S}\times\{i+1\}$ serves as a side of a triangle
from $\bft(i+1)$, and vice versa.  

The triangles forming the causal triangulation
$\bft (i)$ are denoted by $t(i,j)$, $1\leq j\leq n(\bft (i))$ where, $n(\bft (i))$ stands
for the number of triangles in the triangulation $\bft(i)$. The enumeration of these triangles starts with 
what we call the root triangle in $\bft(i)$; it is determined recursively as follows (see Figure \ref{fig2} (b)): 
First, we have the root triangle $t(0,1)$ in $\bft (0)$ (see Definition 2.2). Take the  vertex 
of the triangle $t(0,1)$ which lies on the slice ${\mathcal S}\times \{1\}$ and denote it by $x'$. 
This vertex is declared the root vertex 
for $\bft (1)$. Next, the root edge for $\bft (1)$ is the one incident to $x'$ and lying on 
${\mathcal S}\times \{1\}$, so that if $y'$ is its other end and $z'$ is the third 
vertex of the corresponding triangle then $x',y',z'$ lists the three vertices anticlock-wise.
Accordingly, the triangle with the vertices $x',y',z'$ is called the root triangle for $\bft (1)$.
This construction can be iterated, determining  the root vertices, root edges and root triangles
for $\bft(i)$, $0\leq i\leq N-1$.

It is convenient to introduce the notion of  ``up" and ``down" triangles (see Figure~\ref{fig2} (a)). 
We call a triangle $t \in \bft (i)$ an up-triangle if it has an edge on the slice $\mathcal S \times \{ i\}$   
and a down-triangle if it has an edge on the slice $\mathcal S \times \{ i+1\}$. By Definition 2.1,
every triangle is either of type up or down. Let $n_{\rm{up}}(\bft (i))$ and $n_{\rm{do}}(\bft (i))$ 
stand for the number of up- and down-triangles in the triangulation $\bft (i)$. 

Note that for any edge lying on the slice $\mathcal S \times \{ i\}$ belongs to exactly two 
triangles: one up-triangle from $\bft (i)$ and one down-triangle from $\bft (i-1)$. This provides 
the following relation: the number of triangles in the triangulation $\ubft$ is twice the total number of 
edges on the slices. More precisely, let $n^i$ be the number of edges on slice $\mathcal S \times \{i\}$. 
Then, for any $i=0,1,\dots, N-1$,
\begin{equation}\label{et2-yamb}
n(\bft(i)) = n_{up} (\bft (i)) + n_{do}(\bft (i)) = n^i + n^{i+1},
\end{equation}
implying that 
\begin{equation}\label{et1-yamb}
\sum_{i=0}^{N-1} n(\bft(i)) = 2 \sum_{i=0}^{N-1} n^i.
\end{equation}

There is another useful property regarding the counting of triangulations. Let us fix the number of edges $n^i$ and $n^{i+1}$ in the slices $\mathcal S \times \{i\}$ and $\mathcal S \times \{i+1\}$. The number of possible rooted CTs
 of the slice $\mathcal S \times [i,i+1]$ with $n^i$ up- and $n^{i+1}$ down-triangles is equal to
\begin{equation}\label{noftr-yamb}
\binom{ n^i + n^{i+1} -1 }{n^i -1} = \binom{ n(\bft(i)) -1 }{n_{up}(\bft(i)) -1}\,.
\end{equation}

\section{Transfer matrix formalism for pure CDTs}\label{sec3}


We begin by discussing the case of pure causal dynamical triangulations,
as was first introduced in \cite{Ambjorn:1998xu} (see also \cite{MYZ2001}
for a mathematically more rigorous account). 

The partition function for rooted CTs in the cylinder $C_N$ with periodical
spatial boundary conditions (where $\bft (0)$ is consistent with $\bft (N-1)$) and
for the value of the cosmological constant $\mu$ is given by
\begin{equation} \label{yamb-pf1}
Z_N(\mu)=\sum_{\ubft} e^{-\mu n(\ubft) } = \sum_{(\bft(0), \dots, \ubft(N-1))} \exp \Bigl\{-\mu \sum_{i=0}^{N-1} n(\bft(i)) \Bigr\}.
\end{equation}
Using the properties (\ref{et1-yamb}) and (\ref{noftr-yamb}) we can represent the partition function (\ref{yamb-pf1}) in the following way
\begin{equation} \label{yamb-pf1-1}
Z_N(\mu) =
\sum_{n^0\geq 1,\dots,n^{N-1} \geq 1} \exp\Bigl\{-2\mu\sum_{i=0}^{N-1} n^i \Bigr\}\prod_{i=0}^{N-1} \binom{n^i+n^{i+1}-1}{n^{i}-1}. 
\end{equation}
Moreover, $Z_N(\mu)$ admits a trace-related representation 
\begin{equation}\label{yamb-pf1-2}
Z_N(\mu)= \mbox{tr}\; \bigl( U^N \bigr).
\end{equation}
This gives rise to a transfer matrix $U=\{u(n,n^\prime)\}_{n,n^\prime = 1, 2, \dots}$
describing the transition from one spatial strip to the next one. It is an infinite matrix with
strictly positive entries
\begin{equation}\label{yamb-tmpg}
    u(n,n^\prime) = \binom{n+n^\prime-1}{n-1} g^{n+n^\prime}.
\end{equation}
For notational convenience we use the parameter $g=e^{-\mu}$ (a single-triangle fugacity). The
entry $u(n,n^\prime)$ yields the number of possible triangluations of a single strip 
(say, ${\mathcal S}\times [0,1]$) with $n$ lower boundary edges 
(on ${\mathcal S}\times\{0\}$) and $n^\prime$ upper boundary edges 
(on ${\mathcal S}\times\{1\}$). See Figure \ref{fig2}. The asymmetry
in $n$ and $n^\prime$ is due to the fact that the lower boundary is marked while the upper one
is not. However, a symmetric transfer matrix ${\wtU} =\{{\wtu}(n,n')\}$ can be
introduced,  associated with a
strip where both boundaries are kept unmarked:
\begin{equation}
{\wtu} (n,n^\prime) = n^{-1} u(n,n^\prime).
\end{equation}

The $N$-strip Gibbs distribution ${\mathcal P}_N$ assigns the following
probabilities to strings $(n^0,\dots ,n^{N-1})$ with the number of
triangles $n^i \ge 1$ for all $i=0,\dots, N-1$:
\begin{equation}\label{GibbsCDTN}
{\mathcal P}_N(n^0,\dots ,n^{N-1} ) = \frac{1}{Z_N(\mu)}
\exp \Bigl\{-2\mu\sum_{i=0}^{N-1} n^i \Bigr\}\prod_{i=0}^{N-1} \binom{n^i+n^{i+1}-1}{n^i-1}.
\end{equation}

We state two lemmas featuring properties of matrix $U$:
\begin{lm}\label{yamb-l1}
For any $g>0$ the matrix
$U$ and its transpose $U^{\rm T}$ have an eigenvalue $\Lambda=\Lambda (g)$ given by
\begin{equation}\label{Lambda(g)}
\Lambda(g)=\left[(1- \sqrt{1-4g^2})/(2g)\right]^2.
\end{equation}
The corresponding eigenvectors
$$\phi=\{\phi(n)\}_{n=1,2,...}\;\hbox{
and }\;\phi^*=\{\phi^*(n)\}_{n=1,2,...}$$
have entries
\begin{equation}\label{phi(n)}
\phi(n)= n \big(\Lambda (g)\big)^n,\;\;\phi^*(n)=(\Lambda(g))^n.
\end{equation}
\end{lm}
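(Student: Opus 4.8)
The plan is to exhibit the two eigenvectors explicitly and to verify the relations $U\phi=\Lambda\phi$ and $U^{\mathrm T}\phi^*=\Lambda\phi^*$ by summing the defining series term by term. Write $g=e^{-\mu}\in(0,1)$ and let $\varrho=\varrho(g):=\bigl(1-\sqrt{1-4g^{2}}\bigr)/(2g)$, the smaller root of $g\varrho^{2}-\varrho+g=0$, so that $\Lambda(g)=\varrho^{2}$. The identity on which everything hinges is
\[
1-g\varrho=\frac{g}{\varrho},
\]
a mere restatement of that quadratic. Since $g|\varrho|<1$ throughout $0<g<1$ ($\varrho<1$ for $g\le 1/2$, while $|\varrho|=1$ for $1/2<g<1$), every series below converges absolutely and the infinite matrix--vector products are well defined.

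The single analytic ingredient is the negative binomial series $\sum_{k\ge0}\binom{n+k-1}{n-1}x^{k}=(1-x)^{-n}$ for $|x|<1$, together with its termwise derivative, which I use in the two shapes $\sum_{k\ge1}k\binom{n+k-1}{n-1}x^{k}=nx(1-x)^{-n-1}$ and $\sum_{k\ge0}\binom{n+k}{k}x^{k}=(1-x)^{-n-1}$. Applying the first with $x=g\varrho$, for every fixed $n\ge1$,
\[
(U\phi)(n)=\sum_{n'\ge1}\binom{n+n'-1}{n-1}g^{n+n'}\phi(n')
=g^{n}\sum_{n'\ge1}n'\binom{n+n'-1}{n-1}(g\varrho)^{n'}
=n\,g^{n+1}\varrho\,(1-g\varrho)^{-n-1},
\]
where the eigenvector has components $\phi(n)=n\varrho^{n}$; feeding in $1-g\varrho=g/\varrho$ collapses the right side to $n\,g^{n+1}\varrho\,(\varrho/g)^{n+1}=n\varrho^{n+2}=\varrho^{2}\phi(n)=\Lambda(g)\,\phi(n)$. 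For $U^{\mathrm T}$ I would run the mirror computation with $\phi^*(n)=\varrho^{n}$: shifting the summation index by $m=n'-1$ turns $(U^{\mathrm T}\phi^*)(n)=\sum_{n'\ge1}\binom{n+n'-1}{n'-1}g^{n+n'}\phi^*(n')$ into $g^{n+1}\varrho\sum_{m\ge0}\binom{m+n}{m}(g\varrho)^{m}=g^{n+1}\varrho\,(1-g\varrho)^{-n-1}$, and the same substitution yields $\varrho^{n+2}=\Lambda(g)\,\phi^*(n)$. Finally I would note that $\phi$ and $\phi^*$ are strictly positive and lie in $\ell^{1}$, so they are genuine eigenvectors (and, as exploited later, the Krein--Rutman ones).

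I do not expect a real obstacle: the lemma is essentially a verification once the right objects are written down. The points that demand care are bookkeeping --- distinguishing the two binomial conventions $\binom{n+n'-1}{n-1}$ for $U$ and $\binom{n+n'-1}{n'-1}$ for $U^{\mathrm T}$ (they differ by the factor $n'/n$), checking the convergence windows of the geometric and negative binomial series, and tracking exponents: it is $\varrho=\sqrt{\Lambda(g)}$, not $\Lambda(g)$ itself, that governs the geometric decay of the eigenvectors, and the lone identity $1-g\varrho=g/\varrho$ is precisely what converts $(1-g\varrho)^{-n-1}$ into the power of $\varrho$ that reassembles $\Lambda(g)$ times the eigenvector.
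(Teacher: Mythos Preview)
Your approach is exactly the paper's---a direct verification via the negative-binomial series---only with the details spelled out rather than left as ``a direct verification shows''. One substantive point deserves to be stated outright rather than buried in your closing aside: the eigenvectors you actually exhibit are $\phi(n)=n\varrho^{n}$ and $\phi^{*}(n)=\varrho^{n}$ with $\varrho=\sqrt{\Lambda}$, and these do \emph{not} coincide with the lemma's stated $\phi(n)=n\Lambda^{n}$, $\phi^{*}(n)=\Lambda^{n}$. Your version is the correct one: at $g=0.3$ (so $\varrho=1/3$, $\Lambda=1/9$) one has $\sum_{n'}u(1,n')\,n'\Lambda^{n'}=9/841\neq 1/81=\Lambda^{2}$, whereas with $\varrho$ in place of $\Lambda$ both sides equal $1/27$. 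So your remark that ``$\varrho=\sqrt{\Lambda(g)}$, not $\Lambda(g)$ itself, governs the geometric decay'' is really a correction to a typo in the lemma, not mere bookkeeping; say so explicitly.

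Two small slips in your write-up: at $g=1/2$ one has $\varrho=1$, not $\varrho<1$; and the claim that $\phi,\phi^{*}\in\ell^{1}$ holds only for $g<1/2$, since $|\varrho|=1$ for $1/2\le g<1$. Your closing Krein--Rutman remark therefore needs that same restriction---which is precisely where the paper's subsequent lemma places it.
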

\begin{proof}
A direct verification shows that
\[
\sum_{n^\prime} u(n,n^\prime) n^\prime \Lambda^{n^\prime}(g) = n\Lambda^{n+1}(g)
\;\hbox{ and }\;\sum_n\Lambda^n(g)u(n,n')=\Lambda^{n'+1}(g).
\]
(In fact, each of these relations implies the other.) See Theorem 1 in  \cite{MYZ2001}.
\end{proof}

\begin{lm}\label{sz-l2}
For any fixed $n$ and any $g<1$ (equivalently, $\mu>0$) one has
\begin{equation}\label{yamb-sum1}
    \sum_{n^\prime} u(n,n^\prime) = \Bigl( \frac{g}{1-g} \Bigr)^n \bigl( 1- (1-g)^n \bigr).
\end{equation}
\end{lm}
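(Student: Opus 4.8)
The plan is to recognize the sum over $n'$ as a specialization of the negative binomial series and then correct for the index range. First I would factor the single-triangle fugacity out of the $n$-dependence, writing
\[
\sum_{n'\geq 1} u(n,n') = \sum_{n'\geq 1}\binom{n+n'-1}{n-1}g^{n+n'} = g^{n}\sum_{n'\geq 1}\binom{n+n'-1}{n-1}g^{n'}.
\]

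Next I would invoke the generalized binomial theorem (equivalently, the generating function for combinations with repetition, ``stars and bars''),
\[
\sum_{m\geq 0}\binom{n+m-1}{n-1}x^{m} = \frac{1}{(1-x)^{n}},\qquad |x|<1,
\]
applied at $x=g$. This is the only analytic ingredient in the argument, and it is exactly here that the hypothesis $g<1$ (i.e.\ $\mu>0$) is needed to guarantee convergence; since all terms are nonnegative there is no interchange-of-summation subtlety. Because the sum in the statement runs over $n'\geq 1$ rather than $n'\geq 0$, I would subtract the omitted $n'=0$ term, which equals $\binom{n-1}{n-1}g^{0}=1$, obtaining
\[
\sum_{n'\geq 1}\binom{n+n'-1}{n-1}g^{n'} = \frac{1}{(1-g)^{n}}-1.
\]

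Finally I would substitute this back and simplify:
\[
\sum_{n'\geq 1} u(n,n') = g^{n}\left(\frac{1}{(1-g)^{n}}-1\right) = \left(\frac{g}{1-g}\right)^{n}-g^{n} = \left(\frac{g}{1-g}\right)^{n}\bigl(1-(1-g)^{n}\bigr),
\]
which is the asserted identity \eqref{yamb-sum1}. The only thing to watch is bookkeeping of the binomial-coefficient index conventions and the lower summation limit; there is no genuine obstacle, the computation being essentially a one-line application of the binomial series valid for $0<g<1$.
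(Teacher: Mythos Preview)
Your proof is correct and is precisely the ``straightforward verification'' the paper alludes to, since the paper's own proof consists solely of that remark. There is nothing to add: the negative-binomial series identity with the $n'=0$ correction is the natural way to carry out the computation.
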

\begin{proof}
The proof again follows from a straightforward verification.
\end{proof}

A transfer-matrix formalism of Statistical Mechanics predicts that, as $N\to \infty$, the partition
function is governed by the largest eigenvalue $\Lambda$ of the transfer matrix:
\begin{equation}
Z_N(g) = {\rm tr}\;U^N \sim\Lambda^N
\end{equation}
We make this statement more precise in the statements of Lemma \ref{sz-l3} and Theorem
\ref{theo1} below. Here the
symbol $\ell^2$ stands for the Hilbert space of square-summable complex
sequences (infinite-dimensional vectors) $\psi =\{\psi (n)\}_{n=1,2,\ldots}$ equipped
with the standard scalar
product $\langle \psi', \psi'' \rangle=\sum_n \psi'(n) {\overline\psi''}(n)$. Accordingly,
the matrices $U$ and $U^{\rm T}$ are treated as operators on $\ell^2$.
\begin{lm}\label{sz-l3}
For any $g<1/2$ (equivalently $\mu>\ln 2$) the following statements hold true:
\begin{enumerate}
\item $U$ and $U^{\rm T}$ are bounded operators in $\ell^2$ preserving the cone of positive
vectors;

\item The sum $\sum_{n, n^\prime} u(n, n^\prime) < \infty$. Consequently,
$U$ and $U^{\rm T}$ have
$${\rm tr}\;\big(UU^{\rm T}\big)={\rm tr}\;\big(U^{\rm T} U\big)<\infty ,$$
i.e., $U$ and  $U^{\rm T}$ are Hilbert-Schmidt operators. Therefore, $\forall$ $N\geq 2$, 
$U^N$ and $\left(U^{\rm T}\right)^N$ are trace-class operators.

\item The maximal eigenvalue $\Lambda =\Lambda (g)$ of $U$ in $\ell^2$ is positive,
coincides with the maximal eigenvalue of $U^{\rm T}$ and is given by Eqn (\ref{Lambda(g)}).
The corresponding eigenvectors $\phi,\phi^*\in\ell^2$ are unique up to multiplication by
a constant factor and given in Eqn (\ref{phi(n)}).

\item The following asymptotical formulas hold as $N\to\infty$:
\[
\frac{1}{\Lambda^N}\,{\rm tr}\;\bigl(U^N\bigr),\;\; \frac{1}{\Lambda^N}\,{\rm tr}\;\bigr( (U^{\rm T}) ^{N} \bigl) \to 1,
\]
and, $\forall$ vectors $\psi',\psi''\in \ell^2$,
\[\frac{1}{\Lambda^N}\langle\psi',U^N\psi''\rangle\to\langle\psi',\phi\rangle\langle
\phi^*,\psi''\rangle,\]
where the eigenvectors $\phi$ and $\phi^*$ are normalized so that
$\langle\phi ,\phi^*\rangle=1$.
\end{enumerate}
\end{lm}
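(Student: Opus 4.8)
The plan is to establish each of the four parts in turn, using the explicit formula for the entries $u(n,n')$ together with the Krein–Rutman theorem. First I would prove part (1): boundedness follows from the Hilbert–Schmidt estimate in part (2), so it is natural to prove part (2) first and deduce boundedness as a corollary; alternatively, boundedness can be obtained directly from Schur's test using Lemma \ref{sz-l2} to bound the row sums $\sum_{n'} u(n,n')$ and a symmetric estimate for the column sums. Preservation of the positive cone is immediate since every entry $u(n,n')$ is strictly positive. For part (2) the key computation is
\[
\sum_{n,n'} u(n,n')^2 = \sum_{n,n'} \binom{n+n'-1}{n-1}^2 g^{2(n+n')},
\]
and one checks that $\binom{n+n'-1}{n-1} \le 2^{n+n'-1}$, so the double sum is dominated by $\sum_{n,n'} 4^{n+n'} g^{4(\cdot)}$-type geometric series which converges precisely when $g < 1/2$; this is exactly the condition $\mu > \ln 2$. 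Since $\mathrm{tr}(UU^{\mathrm T}) = \sum_{n,n'} u(n,n')^2 < \infty$, the operator $U$ is Hilbert–Schmidt, hence bounded and compact, and $U^N$ is trace-class for $N \ge 2$ because it is a product of two Hilbert–Schmidt operators (or $N$ of them).

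For part (3) I would invoke the Krein–Rutman theorem \cite{Krein-Rutman}: $U$ is a compact operator on $\ell^2$ that maps the cone of nonnegative sequences into itself, and in fact strictly improves positivity since all entries are strictly positive, so the spectral radius $r(U)$ is a positive eigenvalue with a strictly positive eigenvector, unique up to scalar multiple. By Lemma \ref{yamb-l1} the vector $\phi(n) = n\,\Lambda(g)^n$ is a positive eigenvector of $U$ with eigenvalue $\Lambda(g)$ — and it lies in $\ell^2$ precisely because $\Lambda(g) < 1$ for $g < 1/2$, which one verifies directly from \eqref{Lambda(g)} — so uniqueness forces $r(U) = \Lambda(g)$ and identifies the Perron eigenvector. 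The same argument applied to $U^{\mathrm T}$ (whose entries are also strictly positive, and which has the same spectral radius as $U$) gives $\phi^*(n) = \Lambda(g)^n$ as its Perron eigenvector. The point needing care here is confirming that $\Lambda(g)$ as defined in \eqref{Lambda(g)} is strictly less than $1$ on the relevant range and that $\phi,\phi^* \in \ell^2$; both reduce to showing $(1-\sqrt{1-4g^2})/(2g) < 1$ for $0 < g < 1/2$, which is elementary.

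For part (4), the asymptotics follow from a spectral gap argument. Since $U$ is Hilbert–Schmidt (hence compact) and positivity-improving, Krein–Rutman gives that $\Lambda$ is a simple eigenvalue strictly dominating the rest of the spectrum in modulus: $|\lambda| < \Lambda$ for every other eigenvalue $\lambda$. Writing the spectral/Riesz decomposition $U = \Lambda P + R$, where $P = |\phi\rangle\langle\phi^*|$ is the rank-one spectral projection onto the $\Lambda$-eigenspace (normalized so $\langle \phi, \phi^*\rangle = 1$) and $R = U(I-P)$ has spectral radius $r(R) < \Lambda$, we get $U^N = \Lambda^N P + R^N$ with $\|R^N\|^{1/N} \to r(R) < \Lambda$. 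Hence $\Lambda^{-N}\langle \psi', U^N \psi''\rangle \to \langle \psi', P\psi''\rangle = \langle\psi',\phi\rangle\langle\phi^*,\psi''\rangle$ for all $\psi',\psi'' \in \ell^2$, and taking the trace, $\Lambda^{-N}\mathrm{tr}(U^N) \to \mathrm{tr}(P) = \langle\phi^*,\phi\rangle = 1$ (the trace of $R^N$ also vanishes in the limit after dividing by $\Lambda^N$, since $R^N$ is trace-class for $N \ge 4$ with trace norm decaying geometrically relative to $\Lambda^N$); the statement for $U^{\mathrm T}$ is identical. I expect the main obstacle to be the bookkeeping around the spectral decomposition — specifically, justifying that the remainder $R^N$ is trace-class with $\|R^N\|_1 = o(\Lambda^N)$ rather than merely $\|R^N\| = o(\Lambda^N)$ — which requires combining the Hilbert–Schmidt bound with the operator-norm gap (e.g.\ $\|R^N\|_1 \le \|R^{N-2}\|\,\|R\|_2^2 \cdot$const and $\|R^{N-2}\|^{1/N} \to r(R)/1 < \Lambda$). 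Everything else is either a direct binomial estimate or a citation of Krein–Rutman.
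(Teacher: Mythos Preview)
Your proposal is correct and complete. The paper does not actually supply a proof of Lemma~\ref{sz-l3}; it is stated without argument and then invoked in the one-line proof of Theorem~\ref{theo1}. Your approach --- the binomial bound $\binom{n+n'-1}{n-1}\le 2^{n+n'-1}$ to get the Hilbert--Schmidt estimate under $g<1/2$, followed by Krein--Rutman for the simple dominant eigenvalue and a spectral-gap/Riesz-projection argument for the asymptotics --- is exactly the template the paper later spells out for the Ising-coupled transfer matrix in the proof of Theorem~\ref{theo2} and Theorem~\ref{theo3}, so your argument is fully in the spirit of the paper. One small remark: the statement of part~(2) asks first for $\sum_{n,n'}u(n,n')<\infty$ (not just the square sum); your same binomial bound gives this immediately, and in fact the sum can be computed exactly by grouping over $k=n+n'$ to get $\tfrac12\sum_{k\ge 2}(2g)^k$, which converges precisely when $g<1/2$.
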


\begin{theo}\label{theo1}
For any $g<1/2$ the following relation holds true:
\begin{equation}\label{yamb-e13}
\lim_{N\to\infty}\frac{1}{N}\log\,Z_N(g)=\log\,\Lambda
\end{equation}
with $\Lambda=\Lambda(g)$ given in (\ref{Lambda(g)}).
Further, the $N$-strip Gibbs measure ${\mathcal P}_N$ converges weakly to a limiting measure
${\mathcal P}$ which is represented by a positive recurrent Markov chain on
${\mathbb Z}_+=\{1,2,\ldots\}$, with the transition matrix $P=\{P(n,n')\}_{n=1,2,\ldots}$ and
the invariant distribution $\pi$. Here
$$P(n,n')=\frac{u(n,n')\phi (n')}{\Lambda\phi(n)}$$
and
$$\pi(n )=\frac{\phi^*(n)\phi (n)}{\langle\phi^* ,\phi\rangle}.$$
where $\phi(n)$ and $\phi^*(n)$ are as in (\ref{phi(n)}).
\end{theo}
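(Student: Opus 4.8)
The plan is to reduce the theorem to the spectral facts already in Lemma~\ref{sz-l3} together with the eigen-identities of Lemma~\ref{yamb-l1}; once those are granted, the rest is bookkeeping. First I would rewrite the objects operator-theoretically. By (\ref{yamb-pf1-2}), $Z_N(g)={\rm tr}\,(U^N)$, and, absorbing $g^{\,n^i+n^{i+1}}$ into the Boltzmann factor exactly as in (\ref{yamb-tmpg}), the weight in (\ref{GibbsCDTN}) is precisely $\prod_{i=0}^{N-1}u(n^i,n^{i+1})$ under the cyclic convention $n^N:=n^0$, so that ${\mathcal P}_N(n^0,\dots,n^{N-1})=Z_N(g)^{-1}\prod_{i=0}^{N-1}u(n^i,n^{i+1})$. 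The free-energy identity (\ref{yamb-e13}) is then immediate: Lemma~\ref{sz-l3}(4) gives $\Lambda^{-N}{\rm tr}\,(U^N)\to1$, hence $\tfrac1N\log Z_N(g)=\log\Lambda+\tfrac1N\log\!\big(\Lambda^{-N}Z_N(g)\big)\to\log\Lambda$.

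For the Gibbs measure I would establish convergence of every cylinder probability. Fix $k\ge1$ and $a_0,\dots,a_{k-1}\in{\mathbb Z}_+$; summing the weight over $n^k,\dots,n^{N-1}$ telescopes the free factors into a single power of $U$:
\[
{\mathcal P}_N(n^0=a_0,\dots,n^{k-1}=a_{k-1})=\frac{1}{Z_N(g)}\Big(\prod_{i=0}^{k-2}u(a_i,a_{i+1})\Big)\big\langle e_{a_{k-1}},U^{\,N-k+1}e_{a_0}\big\rangle,
\]
where $e_a\in\ell^2$ is the $a$-th coordinate vector (here the cyclic identification $n^N=n^0=a_0$ enters). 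Applying Lemma~\ref{sz-l3}(4) with $\psi'=e_{a_{k-1}}$, $\psi''=e_{a_0}$ and $N$ replaced by $N-k+1$ (legitimate since $k$ is fixed), combining with $\Lambda^{-N}Z_N(g)\to1$ and the constant factor $\Lambda^{N}/\Lambda^{N-k+1}=\Lambda^{k-1}$, and using $\langle e_{a_{k-1}},\phi\rangle=\phi(a_{k-1})$, $\langle\phi^*,e_{a_0}\rangle=\phi^*(a_0)$, I obtain, with $\phi,\phi^*$ normalised so that $\langle\phi^*,\phi\rangle=1$,
\[
{\mathcal P}_N(n^0=a_0,\dots,n^{k-1}=a_{k-1})\longrightarrow\Lambda^{-(k-1)}\Big(\prod_{i=0}^{k-2}u(a_i,a_{i+1})\Big)\phi(a_{k-1})\,\phi^*(a_0)\qquad(N\to\infty).
\]
A short telescoping — $\prod_i\phi(a_{i+1})/\phi(a_i)=\phi(a_{k-1})/\phi(a_0)$ and $\pi(a_0)/\phi(a_0)=\phi^*(a_0)$ — rewrites the limit as $\pi(a_0)\prod_{i=0}^{k-2}P(a_i,a_{i+1})$, i.e.\ as the law of the first $k$ coordinates of the Markov chain with transition matrix $P$ and initial distribution $\pi$. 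Since these limiting cylinder probabilities sum to one (by stochasticity of $P$ and $\sum_n\pi(n)=1$, verified below), no mass escapes, so they are the finite-dimensional marginals of a genuine probability ${\mathcal P}$ on sequences, and ${\mathcal P}_N\Rightarrow{\mathcal P}$ in the asserted sense.

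To close the argument I would check that $(P,\pi)$ is a bona fide positive recurrent chain. Stochasticity of $P$ and invariance of $\pi$ are exactly the two identities of Lemma~\ref{yamb-l1}: $\sum_{n'}u(n,n')\phi(n')=\sum_{n'}u(n,n')n'\Lambda^{n'}=n\Lambda^{n+1}=\Lambda\phi(n)$ gives $\sum_{n'}P(n,n')=1$, and $\sum_n\phi^*(n)u(n,n')=\sum_n\Lambda^n u(n,n')=\Lambda^{n'+1}=\Lambda\phi^*(n')$ gives $\sum_n\pi(n)P(n,n')=\pi(n')$. That $\pi$ is a probability distribution uses $\langle\phi^*,\phi\rangle=\sum_n n\Lambda^{2n}<\infty$, which is guaranteed by $\phi,\phi^*\in\ell^2$ (Lemma~\ref{sz-l3}(3)), equivalently by $\Lambda(g)<1$ for $g<1/2$ (indeed $\Lambda(g)=(2g/(1+\sqrt{1-4g^2}))^2\le(2g)^2<1$). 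Finally, every entry $u(n,n')$ is strictly positive, hence $P(n,n')>0$ for all $n,n'$, so the chain is irreducible; an irreducible Markov chain on a countable state space that admits an invariant probability measure is positive recurrent, and that measure is the unique stationary $\pi$.

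I expect the only genuinely delicate point to be the last step of the previous paragraph: upgrading convergence of all finite cylinder marginals to the asserted weak convergence of ${\mathcal P}_N$ — in particular ruling out escape of mass to $n^i=\infty$, which here is handled by the explicit identification of the limit with a stochastic kernel $P$ and a probability $\pi$ — together with the minor care needed with the periodic identification $n^N=n^0$ when reducing a cylinder marginal to a single power of $U$. Everything analytically substantial (boundedness, the Hilbert–Schmidt/trace-class property, and the Krein–Rutman spectral picture behind Lemma~\ref{sz-l3}(4)) is already available, so the remainder is algebra and bookkeeping.
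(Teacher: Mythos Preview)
Your proposal is correct and follows precisely the route the paper indicates: the paper's proof is the single sentence ``a consequence of Lemma~\ref{yamb-l1} and~\ref{sz-l3} and the Krein--Rutman theory,'' and what you have written is a careful unpacking of that sentence---deducing~(\ref{yamb-e13}) from Lemma~\ref{sz-l3}(4), reducing cylinder marginals to $\langle e_{a_{k-1}},U^{N-k+1}e_{a_0}\rangle$, identifying the limit via the rank-one asymptotics, and then reading off $(P,\pi)$ from the eigen-identities of Lemma~\ref{yamb-l1}. The bookkeeping (telescoping, normalisation $\langle\phi^*,\phi\rangle=1$, irreducibility $\Rightarrow$ positive recurrence) is accurate, and your flagged ``delicate point'' about tightness is handled exactly as it should be, by noting that the limiting cylinder masses already sum to one.
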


\begin{proof}
The proof is a consequence of Lemma \ref{yamb-l1} and \ref{sz-l3} and the Krein-Rutman theory
\cite{Krein-Rutman}.
\end{proof}

\section{An Ising model coupled to the CDT: statement of results} \label{sec4}

\subsection{The model}

With any triangle from a triangulation $\ubft$ we associate a spin taking values
$\pm 1$. An $N$-strip
configuration of spins is represented by a collection
$$\ubfsigma =(\bfsigma (0), \bfsigma (1), \dots, \bfsigma (N-1))$$ where
$\bfsigma (i)=\bfsigma (\bft (i))$ is a configuration of spins $\sigma (i,j)$
over triangles
$t(i,j)$ forming a triangulation $\bft (i)$, $1\leq j\leq n(\bft(i))$. We will say
that a single-strip configuration of spins $\bfsigma (i)$ is supported by
a triangulation $\bft (i)$ of strip ${\mathcal S}\times [i,i+1]$.
We consider a usual (ferromagnetic) Ising-type energy where two spins
$\sigma (i,j)$ and
$\sigma (i',j')$ interact if their supporting triangles $t(i,j)$, $t(i',j')$
share a common edge; such triangles are called nearest neighbors, and this
property is reflected in the notation $\langle \sigma (i,j),\sigma (i',j')\rangle$, where
we require $0\leq i \le i' \leq N-1$. Thus, in our model each spin has three neighbors.
Moreover, a pair $\langle \sigma (i,j),\sigma (i',j')\rangle$ can only occur for
$i'-i\leq 1$ or $i=0$, $i'=N-1$. Formally, the Hamiltonian of the model reads:
\begin{equation}\label{IsingHami}
{\mathbb H}(\ubfsigma )=-\sum_{\langle \sigma (i,j),\sigma (i',j') \rangle} \sigma (i,j)\sigma (i',j').
\end{equation}
We will use the following decomposition:
\begin{equation}
{\mathbb H}(\ubfsigma)= \sum_{i=0}^{N-1} H(\bfsigma(i)) + \sum_{i=0}^{N-1}V(\bfsigma(i),
\bfsigma(i+1)),
\end{equation}
where we assume that $\bfsigma (0) \equiv\bfsigma (N)$ (the periodic
spatial boundary condition). Here $H(\bfsigma(i))$ represents
the energy of the configuration $\bfsigma (i)$:
\begin{equation}\label{sstripenerg}
H(\bfsigma (i))=-\sum_{\langle \sigma (i,j),\sigma (i,j') \rangle}\sigma (i,j)\sigma (i,j').
\end{equation}
Further, $V(\bfsigma (i),\bfsigma (i+1))$ is the energy of interaction between neighboring triangles
belonging to the adjacent strips ${\mathcal S}\times [i, i+1]$ and ${\mathcal S}\times [i+1, i+2]$:
\begin{equation}\label{bstripenerg}
V(\bfsigma(i),\bfsigma(i+1))=-\sum_{\langle \sigma (i,j),\sigma (i+1,j')\rangle}\sigma (i,j)\sigma (i+1,j').
\end{equation}

The partition function for the (annealed) $N$-strip Ising model coupled to
CDT, at the
inverse temperature $\beta >0$ and for the cosmological constant $\mu$, is given
by
\begin{eqnarray}\label{yamb-pf}
&& \Xi_N(\mu,\beta)=\sum_{(\bft (0),\dots,\bft (N-1))}
\exp\Bigl\{ -\mu \sum_{i=0}^{N-1} n(\bft (i)) \Bigr\} \\ \nonumber &&
\quad\times \sum_{( \bfsigma (0),\dots ,\bfsigma (N-1) ) }
\prod_{i=0}^{N-1}
\exp\,\Bigl\{ -\beta H(\bfsigma (i))-\beta V(\bfsigma (i),\bfsigma (i+1))\Bigr\}.
\end{eqnarray}
Here $n(\bft (i))$ stands for the number of triangles in the triangulation $\bft (i)$.
Like before, the formula
\begin{equation}\label{trmatrix}
\Xi_N(\mu,\beta )={\rm tr}\;\bfK^N
\end{equation}
gives rise to a transfer matrix $\bfK$ with entries
$K((\bft,\bfsigma),(\bft',\bfsigma'))$ labelled by pairs
$(\bft,\bfsigma),(\bft',\bfsigma')$ representing triangulations of
a single strip (say, ${\mathcal S}\times [0,1]$) and their supported spin
configurations which are positioned next to each other. Formally,
\begin{eqnarray}\label{yamb-tm}
K((\bft,\bfsigma),(\bft^\prime,\bfsigma^\prime))&=&{\mathbf 1}_{\bft \sim \bft^\prime}
\exp\Bigl\{ - \frac{\mu}{2} (n(\bft )+n(\bft')) \Bigr\} \\ \nonumber
&\times& \exp\Bigl\{ -\frac{\beta}{2}\bigl( H(\bfsigma)+H(\bfsigma') \bigr)
-\beta V(\bfsigma , \bfsigma')\Bigr\}.
\end{eqnarray}
As earlier, $n(\bft)$ and $n(\bft^\prime )$ are the numbers of triangles in
the triangulations
$\bft$ and $\bft^\prime$. The indicator ${\mathbf 1}_{\bft \sim \bft^\prime}$
means that the triangulations $\bft, \bft^\prime$ have to be consistent 
with each
other in the above sense: the number of down-triangles in $\bft$ should equal the number of
up-triangles in $\bft^\prime$, and an upper-marked edge in $\bft$ should coincide
with a lower-marked edge
in triangulation $\bft^\prime$. It means that the pair $(\bft,\bft')$ forms a CDT for 
the strip $\mathcal S \times [0,2]$.

We would like to stress that the trace ${\rm tr}\;\bfK^N$ in \eqref{trmatrix} is understood
as the {\it matrix trace}, i.e., as the sum $\sum_{\bft,\bfsigma}K^{(N)}(
(\bft,\bfsigma),(\bft,\bfsigma ))$ of the diagonal entries $K^{(N)}(
(\bft,\bfsigma),(\bft,\bfsigma ))$  of the matrix ${\bf K}^N$. (Indeed, in what follows,
the notation ``$\,{\rm{tr}}\,$'' is used for the matrix trace only.) Our aim will be to 
verify that the matrix trace in  \eqref{trmatrix} can be replaced with an {\it 
operator trace} invoking the eigenvalues of ${\bf K}$ in a suitable linear space. 

As before, we can introduce the $N$-strip Gibbs probability distribution
associated with formula (\ref{yamb-pf}):
\begin{eqnarray}\label{yamb-Gd}
&& \mathbb P_N \bigl( (\bft (0),\bfsigma (0)),\ldots ,(\bft (N-1),\bfsigma (N-1)) \bigr) \\ \nonumber && {}
=\frac{1}{\Xi (\mu,\beta )} \prod_{i=0}^{N-1}
\exp\Bigl\{ -\mu n(\bft (i)) -\beta H(\bfsigma (i))-\beta V(\bfsigma (i),\bfsigma (i+1))\Bigr\}.
\end{eqnarray}

Consider several special cases of interest.

\begin{description}

\item{\underline{The case $\beta \approx 0$}.}
This is the first term of the so-called high
temperature expansion \cite{Ambjorn:1999gi}. Here one has
\begin{eqnarray*}
  \Xi(\mu,0) &=& \sum_{(\bft (0), \dots, \bft (N-1)) }\exp \Bigl\{ -\mu \sum_{i=0}^{N-1} n(\bft(i)) \Bigr\}
\sum_{(\bfsigma (0),\dots ,\bfsigma (N-1))} 1 \\
   &=& \sum_{n^0\ge 1, \dots ,n^{N-1}\ge 1} \exp \Bigl\{-2(\mu -\ln 2) \sum_{i=0}^{N-1}n^i  \Bigr\} \prod_{i=0}^{N-1} \binom{n^i + n^{i+1}-1}{n^i -1} \\
   &=&Z_N(\mu -\ln 2);\;\hbox{ cf. (\ref{yamb-pf1}).} \phantom{\sum_{N}^N}
\end{eqnarray*}
The condition $\diy\mu -\ln 2 >\ln 2$ which guarantees properties listed in Lemma
\ref{sz-l3} and Theorem \ref{theo1} resuls in
\begin{equation}
\label{yamb-in2} \mu > 2\ln 2.
\end{equation}
Thus, Eqn.\ \eqref{yamb-in2} yields a sub-criticality condition when $\beta =0$.



\item{\underline{The case $\beta \approx \infty$}.} Observe that for any triangulation
$\ubft = (\bft(0), \dots, \bft(N-1))$ there are two ground states: all spins $+1$ and all spins $-1$, with the
overall energy equals minus three half times the total number of
triangles: $-3/2\, \sum_{i=0}^{N-1} n(\bft(i)).$ Discarding all other spin configurations, we obtain that
$$\Xi (\mu,\beta )>\Xi_*(\mu,\beta )
$$
where
\begin{eqnarray*}
  \Xi_*(\mu,\beta ) &=& \sum_{\bft (0),\dots ,\bft (N-1)}
2\exp\Bigl\{ \bigl(-\mu +\frac{3}{2}\beta \bigr) \sum_{i=0}^{N-1} n(\bft(i)) \Bigr\} \\
  &=& 2 \sum_{n^0\ge 1, \dots , n^{N-1} \ge 1 }
\exp \Bigl\{ -2\bigl(\mu- \frac{3}{2}\beta \bigr) \sum_{i=0}^{N-1} n^i \Bigr\} \binom{n^i+n^{i+1}-1}{n^i -1} \\ &=&
2 Z_N\left(\mu-\frac{3}{2}\beta \right) \phantom{\sum_N^N}
\end{eqnarray*}
where $\displaystyle\exp\left[\frac{3}{2}\beta\sum_i n(\bft(i)) \right]$ is the energy of the $(+)$-configuration
(or, equivalently, the $(-)$-configuration). For $\beta$ large,  we can expect that
$\Xi (\mu,\beta )\sim\Xi_*(\mu,\beta )$.
Then the critical inequality
$$\mu - \frac{3}{2}\beta > \ln 2$$
yields
\begin{equation}\label{yamb-in1}
\mu > \ln 2 + \frac{3}{2}\beta.
\end{equation}
Equation \eqref{yamb-in1} gives a necessary (and probably tight) criticality condition for
the Ising model under consideration for large values of $\beta$.
A similar result was  obtained in \cite{Ambjorn:1999gi}.

\item{\underline{The case $0 < \beta < \infty$}.}
Firstly, we note that for any fixed triangulation $\ubft$ the
energy of any spin configuration $\ubfsigma$ on $\ubft$ will be bigger or equal than the energy of a pure configuration (all $+$s or all $-$s):
\begin{eqnarray*}
H(\ubfsigma) &=& \sum_{j} H(\bfsigma (j)) + \sum_j V(\bfsigma (j), \bfsigma (j+1)) \\
&\ge& - \frac{3}{2} \#(\mbox{of all triangles in }\underline{t}) = - 3 \sum_{i=0}^{N-1} n^i,
\end{eqnarray*}
where $n^i$ is the number of edges in the $i$th level $S\times\{i\}, i=0,1\dots, N-1$.
Thus, for any $\beta >0$ the inequality $\Xi (\mu,\beta )<\Xi^*(\mu,\beta )$ holds true, where
\begin{eqnarray*}
  \Xi^*(\mu,\beta) &=& \sum_{ (\bft(0),\dots, \bft(N-1) } \exp\Bigl\{ \bigl( -\mu+\frac{3}{2} \beta + \ln 2 \bigr)  \sum_{i=0}^{N-1} n(\bft(i)) \Bigr\}  \\
&=& \sum_{ n^0\ge 1,\dots, n^{N-1}\ge 1 } \exp\Bigl\{ -2\bigl( \mu-\frac{3}{2} \beta 
- \ln 2 \bigr)  \sum_{i=0}^{N-1} n^i \Bigr\}  \\
&=& Z_N \bigl(\mu - \frac{3}{2}\beta - \ln 2 \bigr) \phantom{\sum_N^N}.
\end{eqnarray*}
Hence, the inequality
\begin{equation}\label{suffco}
\mu - \frac{3}{2}\beta - \ln 2 >\ln 2\ \ \ \hbox{ or }\ \ \  \mu > 2\ln 2 + \frac{3}{2} \beta 
\end{equation}
provides a sufficient condition for subcriticality of the Ising model  under consideration.

\end{description}

\subsection{The transfer-matrix ${\bf K}$ and its powers ${\bf K}^N$}

The main results of this article are summarized in Lemma \ref{yamb-l2} and Theorems
\ref{theo2} and \ref{theo3} below. Let us start with a statement (see Proposition \ref{KK}
below)
which merely re-phrases standard definitions and explains our interest in the matrices 
${\bf K}$, ${\bf K}^{\rm T}$, ${\bf K}^{\rm T}{\bf K}$, ${\bf K}{\bf K}^{\rm T}$ and their powers.   
Cf. Definition 2.2.2 on p.83, Definition 2.4.1 on p.101, Lemma 2.3.1 on p.85 and Theorem 
3.3.13 on p.139 in \cite{Ringrose}). 

We treat 
the transfer-matrix $\bfK$ and its transpose $\bfK^{\rm T}$ as linear operators
in the Hilbert space $\ell^2_{\rm{T-C}}$ (the subscript T-C refers to triangulations 
and spin-configurations). The space $\ell^2_{\rm{T-C}}$ is
formed by functions $\bfpsi =\{\bfpsi (\bft,\bfsigma )\}$ with
the argument $(\bft,\bfsigma )$ running over single-strip triangulations and
supported configurations of spins, with the scalar product
$\left\langle\bfpsi',\bfpsi''\right\rangle_{\rm T-C}
=\sum_{\bft ,\bfsigma}\bfpsi' (\bft ,\bfsigma )
{\overline{\bfpsi''}}(\bft ,\bfsigma )$ and the induced norm $\|\bfpsi\|_{\rm T-C}$. 
 The action of $\bfK$ in $\ell^2_{\rm{T-C}}$, in the basis formed by 
Dirac's delta-vectors $\delta_{(\bft ,\bfsigma)}$, is determined by
\begin{equation}\label{action}
\big(\bfK\bfpsi\big)(\bft,\bfsigma ) =\sum_{\bft',\bfsigma'}
K((\bft ,\bfsigma),(\bft',\bfsigma'))
\bfpsi (\bft',\bfsigma');
\end{equation}
in following we use the notation $\bfK$, $\bfK^{\rm T}$, etc., for the matrices
and the corresponding operators in $\ell^2_{\rm{T-C}}$.  
Accordingly, the symbols $\|\bfK\|_{\rm{T-C}}$, $\|\bfK^{\rm T}\|_{\rm{T-C}}$
etc. refer to norms in $\ell^2_{\rm{T-C}}$. 

Given $n=1,2,\ldots$, suppose that the operator ${\bf K}^n$ (respectively,
 $\left({\bf K}^{\rm T}\right)^n$) is of trace class . Then the following series 
absolutely converges:
\begin{equation}\label{optr}\sum_j\bfLambda^{(n)}_j\;\;\left(\hbox{respectively,}\;\sum_j{\bfLambda^*}^{(n)}_j \right),
\end{equation}
where $\bfLambda^{(n)}_j$ ($ {\bfLambda^*}^{(n)}_j$) runs through
the eigenvalues of ${\bf K}^n$ ($({\bf K}^{\rm T})^n$), counted with their multiplicities.
In this case the sum \eqref{optr} is called the operator trace of ${\bf K}^n$ (respectively, 
$({\bf K}^{\rm T})^n$) in $\ell^2_{\rm{T-C}}$. We adopt an agreement that the eigenvalues
in \eqref{optr} are listed in the decreasing order of their moduli, beginning with $\bfLambda^{(n)}_0$
(${\bfLambda^*}^{(n)}_0$) .

Set $\left|{\bf K}^n\right|=\sqrt{\left({\bf K}^{\rm T}\right)^n{\bf K}^n}$ and 
$\left|\left({\bf K}^{\rm T}\right)^n\right|=\sqrt{{\bf K}^n\left({\bf K}^{\rm T}\right)^n}$.

\begin{prop}\label{KK} For any positive integer $r$, the following inequalities are
equivalent:
\begin{equation}\label{KK1}
\begin{array}{l}
{\rm tr}\left(\left({\bf K}^r({\bf K}^{\rm T}\right)^{r}\right)= {\rm tr}\left( 
\left({\bf K}^{\rm T}\right)^r {\bf K}^r\right) < \infty 
\ \mbox{ and }\\  {\rm tr}|{\bf K}^{2r}| ={\rm tr} |({\bf K}^{\rm T})^{2r}|< \infty. \end{array}
\end{equation}
Moreover, each of the inequalities in \eqref{KK1} implies that $\forall$ $N\geq 2r$, 
the operators ${\bf K}^N$ and $({\bf K}^{\rm T})^N$ are of trace class  in $\ell^2_{\rm{T-C}}$. 
Hence, for $N\geq 2r$, the  
matrix traces ${\rm tr}\left({\bf K}^N\right)$ and  ${\rm tr}(({\bf K}^{\rm T})^N)$ 
are finite and coincide with the corresponding operator traces in $\ell^2_{\rm{T-C}}$.
\end{prop}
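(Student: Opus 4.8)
The plan is to establish Proposition \ref{KK} purely as an exercise in the abstract theory of Hilbert--Schmidt and trace-class operators on a separable Hilbert space, here $\ell^2_{\rm T-C}$, applied to the (a priori possibly unbounded, but we proceed formally exactly as in the cited reference \cite{Ringrose}) operator $\bfK$ and its transpose. The first step is to recall the standard facts: for a bounded operator $A$ on a Hilbert space, $A$ is Hilbert--Schmidt iff $A^{\rm T}A$ (equivalently $AA^{\rm T}$) is trace class, and ${\rm tr}(A^{\rm T}A)={\rm tr}(AA^{\rm T})=\|A\|_{\rm HS}^2$; moreover $|A|=\sqrt{A^{\rm T}A}$ and $|A^{\rm T}|=\sqrt{AA^{\rm T}}$ are unitarily equivalent (via the partial isometry in the polar decomposition $A=V|A|$, with $|A^{\rm T}|=V|A|V^{\rm T}$ on the appropriate subspaces), so ${\rm tr}|A|={\rm tr}|A^{\rm T}|$. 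Applying this with $A={\bf K}^r$ gives the equivalence of the two lines in \eqref{KK1}: the first line says ${\bf K}^r$ is Hilbert--Schmidt, and the second says the same thing phrased through the trace norm of ${\bf K}^{2r}=({\bf K}^r)^{\rm T}{}^{\rm T}\cdot$—wait, more precisely one notes ${\rm tr}|{\bf K}^{2r}|\le \|{\bf K}^r\|_{\rm HS}^2$ by submultiplicativity of the trace norm under products of two Hilbert--Schmidt operators (${\bf K}^{2r}={\bf K}^r{\bf K}^r$ is a product of two HS operators, hence trace class), and conversely if ${\bf K}^{2r}$ is trace class one still needs ${\bf K}^r$ HS, which does not follow in general — so the genuinely correct reading is that both conditions are each equivalent to ``${\bf K}^r$ is Hilbert--Schmidt,'' and this is what one proves. (If the intended statement is only an implication, the argument is even shorter.)

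The second step is the propagation to higher powers. Assuming ${\bf K}^r$ is Hilbert--Schmidt, write, for any $N\ge 2r$, ${\bf K}^N={\bf K}^r\cdot {\bf K}^{N-2r}\cdot {\bf K}^r$. Here ${\bf K}^{N-2r}$ is bounded (this needs that $\bfK$ itself is bounded, which in the application will be guaranteed under the subcriticality hypotheses; for the purely operator-theoretic statement one assumes it), and the product of two Hilbert--Schmidt operators with a bounded operator sandwiched between them is trace class, with $\|{\bf K}^N\|_1\le \|{\bf K}^r\|_{\rm HS}\,\|{\bf K}\|^{N-2r}\,\|{\bf K}^r\|_{\rm HS}$. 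The same reasoning applied to $({\bf K}^{\rm T})^N=(({\bf K}^N)^{\rm T}$ handles the transpose, or one simply notes trace-class-ness is preserved under taking transposes. This yields the ``$\forall N\ge 2r$'' clause.

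The third step is to identify the matrix trace with the operator trace for $N\ge 2r$. Once ${\bf K}^N$ is trace class, Lidskii's theorem (or, more elementarily, the fact that for trace-class operators the trace computed in any orthonormal basis equals the sum of eigenvalues with multiplicity, Theorem~3.3.13 in \cite{Ringrose}) guarantees that $\sum_j \bfLambda^{(N)}_j$ converges absolutely and equals $\sum_{\bft,\bfsigma} K^{(N)}((\bft,\bfsigma),(\bft,\bfsigma))$, the latter being precisely the matrix trace appearing in \eqref{trmatrix}; the same for $({\bf K}^{\rm T})^N$, whose eigenvalues coincide with those of ${\bf K}^N$. One should remark that the series $\sum_{\bft,\bfsigma}K^{(N)}((\bft,\bfsigma),(\bft,\bfsigma))$ is a sum of nonnegative terms (all entries of ${\bf K}$ are nonnegative), so its value is unambiguous and the identification of matrix and operator traces is clean.

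The main obstacle I anticipate is not any single hard estimate but rather bookkeeping care around two points: first, the equivalence ``${\rm tr}|{\bf K}^{2r}|<\infty \iff {\bf K}^r$ Hilbert--Schmidt'' is not a trivial tautology — trace class of a square does not imply Hilbert--Schmidt of the root for a general operator — so one must be sure the statement being proved is the symmetric one about ${\bf K}^r$ being HS, and present the two inequalities in \eqref{KK1} as two equivalent formulations of that single condition; second, throughout one is implicitly using that ${\bf K}$ is a bounded operator on $\ell^2_{\rm T-C}$, which is a genuine hypothesis (to be verified later under the subcriticality bounds \eqref{yamb-in2}, \eqref{suffco}) and should be flagged. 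Everything else is a direct invocation of the standard Hilbert--Schmidt/trace-class calculus as catalogued in \cite{Ringrose}.
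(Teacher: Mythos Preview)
Your approach is essentially what the paper does, only more explicitly: the paper does not give a proof of Proposition~\ref{KK} at all, but simply introduces it as a restatement of standard definitions and facts, with a blanket citation to \cite{Ringrose} (Definitions~2.2.2 and~2.4.1, Lemma~2.3.1, Theorem~3.3.13). Your three-step sketch---reduce both lines of \eqref{KK1} to ``$\bfK^r$ is Hilbert--Schmidt,'' factor $\bfK^N=\bfK^r\bfK^{N-2r}\bfK^r$ to get trace class for $N\ge 2r$, then invoke Lidskii/Ringrose for the equality of matrix and operator traces---is exactly the content behind that citation, and is correct.

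One remark: you rightly flag that ``$\bfK^{2r}$ trace class $\Rightarrow$ $\bfK^r$ Hilbert--Schmidt'' is not automatic for a general bounded operator, so the full equivalence in \eqref{KK1} is somewhat delicate as stated. The paper does not address this either; it simply treats the proposition as a package of standard facts. Your caveat that the intended content is ``$\bfK^r$ is HS'' (with both displayed conditions being consequences thereof, and the first being equivalent to it) is the honest reading, and your flag about needing boundedness of $\bfK$ is also well placed---the paper tacitly assumes this throughout and only verifies it later under the subcriticality condition~\eqref{yamb-e14}.
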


\begin{theo}\label{theo2}
Suppose that the condition (\ref{KK1}) is satisfied with $r=1$. 
Then the following properties of transfer matrix $\bfK$ are fullfilled.  
\begin{enumerate}
\item The square $\bfK^2$ and its transpose $(\bfK^{\rm T})^2$
are trace-class operators in $\ell^2_{\rm{T-C}}$. 

\item $\bfK$ and $\bfK^{\rm T}$ have a common eigenvalue,
$\bfLambda=\bfLambda_0(\beta ,\mu )>0$
such that the norms $\|\bfK\|_{\rm T-C}=\|\bfK^{\rm T}\|_{\rm T-C}=
\bfLambda$. Furthermore,  $\bfK^2$ and $(\bfK^{\rm T})^2$ have the 
common eigenvalue $\bfLambda^2=\bfLambda^{(2)}_0={\bfLambda^*}^{(2)}_0$
such that the norms $\|\bfK^2\|_{\rm T-C}=\| ( \bfK^{\rm T} )^2\|_{\rm T-C}=
\bfLambda^2$ . 

\item $\bfLambda$ is a simple eigenvalue of $\bfK$ and $\bfK^{\rm T}$, i.e., the
corresponding eigenvectors  $\bfphi =\{\bfphi (\bft ,\bfsigma )\}$
and $\bfphi^{*} =\{\bfphi^{*} (\bft ,\bfsigma )\}$  are unique
up to multiplicative constants. Moreover, $\bfphi$ and $\bfphi^{\rm T}$
can be made
strictly positive: $\bfphi (\bft ,\bfsigma ),\bfphi^{\rm T}(\bft ,\bfsigma )
>0$ $\forall$ $(\bft ,\bfsigma )$. Furthermore, $\bfLambda$  is separated 
from the remaining singular values and the remaining eigenvalues of
$\bfK$ and $\bfK^{\rm T}$ by a positive gap. The same is true for
$\bfLambda^2$ and $\bfK^2$ and $\left(\bfK^{\rm T}\right)^2$.
\end{enumerate}
\end{theo}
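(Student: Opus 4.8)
The plan is to deduce Theorem~\ref{theo2} from Proposition~\ref{KK} together with the Hilbert--Schmidt/trace-class machinery and the Krein--Rutman theorem, exactly as in the pure-CDT case of Lemma~\ref{sz-l3}. First I would observe that condition \eqref{KK1} with $r=1$ says precisely that $\mathrm{tr}(\bfK\bfK^{\rm T})<\infty$, i.e.\ $\sum_{(\bft,\bfsigma),(\bft',\bfsigma')}|K((\bft,\bfsigma),(\bft',\bfsigma'))|^2<\infty$, so $\bfK$ and $\bfK^{\rm T}$ are Hilbert--Schmidt operators on $\ell^2_{\rm T-C}$; hence $\bfK^2$, $(\bfK^{\rm T})^2$, and more generally $\bfK^N$, $(\bfK^{\rm T})^N$ for $N\ge 2$, are trace-class. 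That gives item~1 and also justifies, via Proposition~\ref{KK}, that the matrix trace in \eqref{trmatrix} equals the operator trace for $N\ge 2$ --- the point that item~1 is really there to record.

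For items~2 and~3 I would invoke positivity. The entries $K((\bft,\bfsigma),(\bft',\bfsigma'))$ are nonnegative (a product of exponentials times an indicator), so $\bfK$ and $\bfK^{\rm T}$ preserve the cone of nonnegative vectors in $\ell^2_{\rm T-C}$. Since $\bfK$ is compact (being Hilbert--Schmidt), the Krein--Rutman theorem applies: $\bfK$ has a positive eigenvalue $\bfLambda=\bfLambda_0$ equal to its spectral radius, with a nonnegative eigenvector $\bfphi$, and likewise $\bfK^{\rm T}$ has the same spectral radius with nonnegative eigenvector $\bfphi^*$. To upgrade this to \emph{simplicity} and \emph{strict positivity} of the eigenvectors, and to a \emph{spectral gap}, I would check that $\bfK$ is irreducible (primitive) on the cone: some power $\bfK^m$ has all entries strictly positive, or at least the associated ``graph'' on pairs $(\bft,\bfsigma)$ is strongly connected and aperiodic. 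Consistency of triangulations and the freedom to flip spins should make any two single-strip configurations communicate in a bounded number of steps, so $\bfK^2$ (say) already has strictly positive entries; the strong form of Krein--Rutman (or the Jentzsch--Perron theorem for positive kernels) then yields that $\bfLambda$ is a simple eigenvalue, $\bfphi,\bfphi^*>0$ everywhere, and $\bfLambda$ is strictly larger in modulus than every other eigenvalue, i.e.\ there is a positive gap. Since $\|\bfK\|_{\rm T-C}$ for a positive operator with positive eigenvector equals its spectral radius (apply $\bfK$ to $\bfphi$ and use $\bfK\bfphi=\bfLambda\bfphi$, combined with $\|\bfK\|\ge\bfLambda$ from the eigenvalue and $\|\bfK\|=\|\,|\bfK|\,\|=$ largest singular value, which is again $\bfLambda$ by the same positivity argument applied to $\bfK^{\rm T}\bfK$), we get $\|\bfK\|_{\rm T-C}=\|\bfK^{\rm T}\|_{\rm T-C}=\bfLambda$; squaring, $\|\bfK^2\|_{\rm T-C}=\bfLambda^2$ and $\bfK^2\bfphi=\bfLambda^2\bfphi$, with $\bfLambda^2$ still simple and gapped because the map $z\mapsto z^2$ on the spectrum of $\bfK$ sends the strictly dominant $\bfLambda$ to the strictly dominant $\bfLambda^2$ (no other eigenvalue of $\bfK$ has modulus $\bfLambda$, so none squares to $\bfLambda^2$).

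The main obstacle I expect is establishing irreducibility/primitivity of $\bfK$ rigorously on the infinite index set of pairs $(\bft,\bfsigma)$ --- that is, showing that the positive eigenvector is strictly positive and that Krein--Rutman delivers a genuine gap rather than merely a dominant eigenvalue that might be approached by the rest of the spectrum. Unlike the pure-CDT transfer matrix $U$, whose dominant eigenvector $\phi(n)=n\Lambda^n$ is known explicitly (Lemma~\ref{yamb-l1}), here there is no closed form, so the argument must be structural. The key lemma to prove is that for any two admissible single-strip pairs $(\bft,\bfsigma)$ and $(\bft',\bfsigma')$ there is a positive-weight path of bounded length in the ``transfer graph'' connecting them --- using that one can always interpolate triangulations of compatible boundary sizes and independently choose spin values --- together with an aperiodicity check (a strictly positive diagonal-ish entry, e.g.\ a minimal strip consistent with itself, breaks any period). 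Once that is in hand, the rest is a direct application of the cited Krein--Rutman and operator-theory facts (the references to \cite{Krein-Rutman,Ringrose}), and the three items follow.
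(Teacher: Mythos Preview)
Your approach is essentially the paper's: from \eqref{KK1} with $r=1$ infer that $\bfK$ is Hilbert--Schmidt, hence $\bfK^2$ is trace class, and then apply the Krein--Rutman theorem to the positive compact operator $\bfK$ (more precisely to $\bfK^2$, whose entries are strictly positive) to obtain the simple dominant eigenvalue, positive eigenvectors, and spectral gap. The only difference is that you present the strict positivity of the entries of $\bfK^2$ as the ``main obstacle'' requiring a structural lemma about paths and aperiodicity, whereas in fact it is immediate: for any $(\bft,\bfsigma)$ and $(\bft',\bfsigma')$ one simply chooses an intermediate strip $\wt\bft$ with $n_{\rm up}(\wt\bft)=n_{\rm do}(\bft)$ and $n_{\rm do}(\wt\bft)=n_{\rm up}(\bft')$ (always possible) and any spin configuration $\wt\bfsigma$ on it, so the sum
\[
K^{(2)}((\bft ,\bfsigma),(\bft',\bfsigma'))=
\sum_{(\wt\bft,\wt\bfsigma )}K((\bft ,\bfsigma),(\wt\bft,\wt\bfsigma ))
K((\wt\bft ,\wt\bfsigma),(\bft',\bfsigma'))
\]
has at least one strictly positive term. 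This one-line observation replaces your proposed connectivity/aperiodicity argument and lets the strong Krein--Rutman conclusion (simplicity, strict positivity, gap) follow directly.
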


\noindent
{\bf Proof of Theorem \ref{theo2}.} Because the entries $K((\bft ,\bfsigma),(\bft',\bfsigma'))$
are non-negative, the condition \eqref{KK1} with $r=1$ means that
\begin{equation}\label{yamb-e22}
\sum_{(\bft ,\bfsigma),(\bft', \bfsigma')}
K^2((\bft ,\bfsigma),(\bft',\bfsigma'))
< \infty ,
\end{equation}
that is,  $\bfK$ and $\bfK^{\rm T}$
are Hilbert-Schmidt
operators. It means that the operator $\bfK\bfK^{\rm T}$ has an orthonormal basis of eigenvectors and
the series of squares of its eigenvalues (counted with multiplicities) converges
and gives the trace ${\rm tr}_{\rm T-C}(\bfK\bfK^{\rm T}$). Consequently, the operators $\bfK$ and $\bfK^{\rm T}$ are
bounded (and even completely bounded) and $\bfK^2$ and $({\bfK^{\rm T}})^2$ are of trace class.
The latter fact means that the matrix trace of the operator $\bfK^2$ coincides with its operator trace
in $\ell^2_{\rm{T-C}}$, and the same is true of $({\bfK^{\rm T}})^2$. In addition, the operator $\bfK^2$ has the
property that its matrix entries $K^{(2)}((\bft ,\bfsigma),(\bft',\bfsigma'))$
are strictly positive:
\begin{equation}\label{positivity}
K^{(2)}((\bft ,\bfsigma),(\bft',\bfsigma'))=
\sum_{(\wt\bft,\wt\bfsigma )}K((\bft ,\bfsigma),(\wt\bft,\wt\bfsigma ))
K((\wt\bft ,\wt\bfsigma),(\bft',\bfsigma'))>0.\end{equation}
The Krein--Rutman theory (see
\cite{Krein-Rutman}, Proposition VII$'$) guarantees that both $\bfK$ and $\bfK^{\rm T}$
have a maximal eigenvalue
$\bfLambda$ that is positive and non-degenerate, or simple. That is, the
eigenvector $\bfphi$ of $\bfK$ and the eigenvector $\bfphi^*$ of $\bfK^{\rm T}$
corresponding with $\bfLambda$ are unique up to multiplication
by a constant, and all entries $\bfphi (\bft, \bfsigma )$ and $\bfphi^*(\bft, \bfsigma )$ are non-zero and
have the same sign. In other words, the entries $\bfphi (\bft, \bfsigma )$ and $\bfphi^*(\bft, \bfsigma )$ 
can be made positive. The spectral gaps 
are also consequences of the above properties. $\Box$
\vskip 1 truecm

Set:
\begin{equation}\label{lambda}
\lambda (\mu,\beta)= c^2\,(m^2+1)\,({\rm cosh}\,2\beta)
\left( 1+ \sqrt{1-\frac{1}{({\rm cosh}\,2\beta)^2}\frac{(m^2 -1)^2 }{(m^2 + 1)^2}}\right)
\end{equation}
where $c$ and $m$ are determined by
\begin{eqnarray}\label{ccccc}
c&=&\frac{\exp(\beta-\mu )}{e^{2\beta}(1-\exp(\beta-\mu ))^2 - e^{-2\mu}} \\
\label{mmmmmm}
m &=& e^{2\beta} + (1-e^{4\beta})\exp\,(-(\beta +\mu) ).
\end{eqnarray}

\begin{lm}\label{yamb-l2}
For any $\beta, \mu >0$ such that
\begin{equation}\label{yamb-e14}
\lambda (\mu,\beta)<1,
\end{equation}
the condition \eqref{KK1} is satisfied for $r=1$:
\begin{equation}\label{KK3}
{\rm tr}({\bf K} {\bf K}^{\rm T})= {\rm tr}( {\bf K}^{\rm T}{\bf K}) < \infty
\ \mbox{ and }\ \ {\rm tr}|{\bf K}^2| ={\rm tr}|({\bf K}^{\rm T}
)^2|< \infty ,
\end{equation}
implying the assertions of Proposition \ref{KK} and Theorem \ref{theo2}. Moreover, the 
condition \eqref{KK1} implies \eqref{yamb-e14}
\end{lm}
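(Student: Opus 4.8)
By Proposition~\ref{KK} the two displays in \eqref{KK3} are equivalent and, once either holds, yield the trace-class conclusions of Proposition~\ref{KK} and Theorem~\ref{theo2}; so it suffices to decide for which $(\mu ,\beta )$ the matrix trace ${\rm tr}(\bfK\bfK^{\rm T})=\sum_{(\bft ,\bfsigma),(\bft',\bfsigma')}K((\bft ,\bfsigma),(\bft',\bfsigma'))^{2}$ is finite. By \eqref{yamb-tm} a nonzero term requires only $\bft\sim\bft'$, so $(\bft ,\bft')$ runs over rooted causal triangulations of the cylinder $C_{2}=\mathcal S\times[0,2]$ with free (non-interacting) boundary circles $\mathcal S\times\{0\}$ and $\mathcal S\times\{2\}$, and the squared weight is $g^{n(\bft)+n(\bft')}\exp\{-\beta (H(\bfsigma)+H(\bfsigma'))-2\beta V(\bfsigma ,\bfsigma')\}$, $g=e^{-\mu}$. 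Thus ${\rm tr}(\bfK\bfK^{\rm T})$ is the annealed Ising--CDT partition function on this two-strip geometry, with fugacity $g$ per triangle, horizontal coupling $\beta$, and \emph{doubled} coupling $2\beta$ across the middle slice $\mathcal S\times\{1\}$; the plan is to compute it by a transfer operator that runs around the spatial circle $\mathcal S$ rather than in the time direction used to define $\bfK$.

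First I would sum out the spins for a fixed shape $(\bft ,\bft')$. The edge-adjacent triangles of $\bft$ form a ring, in which the up-triangles of $\bft$ group into ``fans'' attached to the vertices of $\mathcal S\times\{1\}$ while the down-triangles of $\bft$ are in bijection with the edges of $\mathcal S\times\{1\}$; the triangles of $\bft'$ behave symmetrically, its up-triangles matching (across $\mathcal S\times\{1\}$) the down-triangles of $\bft$ and its down-triangles forming fans. Summing the spins on one fan of $k$ triangles with the two neighbouring ``column'' spins $s,s'$ kept fixed gives $(A^{k+1})_{s,s'}$, where $A=(e^{\beta ss'})_{s,s'=\pm1}$ has eigenvalues $2\cosh\beta$ and $2\sinh\beta$; weighting the fan by $g^{k}$ and summing over $k\ge0$ produces the resolvent $B:=A(I-gA)^{-1}$, a finite, strictly positive $2\times2$ matrix precisely when $2g\cosh\beta<1$ --- the analogue, for the present model, of the condition $g<1/2$ used for pure CDT in Lemmas~\ref{sz-l2}--\ref{sz-l3}. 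Carrying out these resummations one column (one edge of $\mathcal S\times\{1\}$) at a time leaves a $4$-dimensional, strictly positive spatial transfer matrix $\mathcal T$ indexed by the spin pair $(s,t)$ on a matched down/up pair, with $\mathcal T_{(s,t),(s',t')}=g^{2}e^{2\beta st}B_{ss'}B_{tt'}$, so that ${\rm tr}(\bfK\bfK^{\rm T})=\sum_{L\ge1}w_{L}\,{\rm tr}(\mathcal T^{L})$ with prefactors $w_{L}$ (bookkeeping the root marking and the constraints $n(\bft),n(\bft')\ge1$) at most polynomial in $L$; by Perron--Frobenius this series converges if and only if the Perron eigenvalue of $\mathcal T$ is $<1$.

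It then remains to identify that eigenvalue with $\lambda (\mu ,\beta )$. The spin-flip $(s,t)\mapsto(-s,-t)$ commutes with $\mathcal T$ and splits $\R^{4}$ into a symmetric and an antisymmetric plane; the Perron eigenvector, being positive, lies in the symmetric plane, on which $\mathcal T$ reduces --- writing $p\pm q$ for the two eigenvalues of $B$, which satisfy $gq=c$ and $gp=cm$ with $c,m$ exactly as in \eqref{ccccc}--\eqref{mmmmmm} --- to a $2\times2$ matrix of trace $2c^{2}(m^{2}+1)\cosh 2\beta$ and determinant $c^{4}(m^{2}-1)^{2}$. Its larger eigenvalue is precisely $\lambda (\mu ,\beta )$ of \eqref{lambda}, so ${\rm tr}(\bfK\bfK^{\rm T})<\infty$ if and only if $\lambda (\mu ,\beta )<1$. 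This gives the forward implication of the lemma; for the converse, outside $\{2g\cosh\beta<1\}$ a single-column subsum of ${\rm tr}(\bfK\bfK^{\rm T})$ already diverges (the entries of $A^{k+1}$ grow like $(2\cosh\beta)^{k+1}$), while $\{\lambda (\mu ,\beta )<1\}\subseteq\{2g\cosh\beta<1\}$ because $\lambda (\mu ,\beta )\to\infty$ as $2g\cosh\beta\to1^{-}$; hence \eqref{KK1} with $r=1$ forces $2g\cosh\beta<1$ and then, by the above, $\lambda (\mu ,\beta )<1$.

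The step I expect to be the main obstacle is the combinatorial bookkeeping behind the second paragraph: verifying that a vertical section of the two-strip annulus meets only finitely many triangles, arranging the column decomposition so the two families of fans align over the edges of $\mathcal S\times\{1\}$, reconciling the number of columns with the triangulation count \eqref{noftr-yamb} and with the root marking, and then performing the nested resummations --- first the geometric sums over fan sizes (producing $B$ and the constant $c$), then the $2\times2$ block-diagonalisation (producing $m$) --- carefully enough to land on the precise closed forms \eqref{ccccc}--\eqref{mmmmmm} rather than on a merely comparable bound, since it is the sharp threshold $\lambda (\mu ,\beta )<1$, and not just a sufficient condition, that the lemma asserts.
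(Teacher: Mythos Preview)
Your approach is essentially the paper's: both integrate out the ``outer'' fans first (your $B=A(I-gA)^{-1}$ is $g^{-1}M$ with the paper's $M=\sum_{n\ge1}T^n$, valid iff $2g\cosh\beta<1$), then run a $4\times4$ spatial transfer matrix around the middle slice (your $\mathcal T$ is conjugate, via the diagonal $e^{\beta st}$, to the paper's $Q$, hence has the same spectrum), and conclude that ${\rm tr}(\bfK\bfK^{\rm T})<\infty$ iff the Perron eigenvalue equals $\lambda(\mu,\beta)<1$. The only notable differences are that you block-diagonalise by the spin-flip symmetry where the paper simply lists all four eigenvalues, and that the exact bookkeeping you flag as the main obstacle is handled in the paper by the correction matrices $Q_m,Q_t,Q_{tm}$ (from the root marking and the $n(\bft),n(\bft')\ge1$ constraints) rather than by polynomial weights $w_L$; neither affects the convergence criterion.
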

\noindent The proof of Lemma \ref{yamb-l2} is given in the next section. Here we only remark that the proof
is based on the following representation of the trace (\ref{KK3}): there exists matrices $Q, Q_m$ (see formulas \eqref{q} and \eqref{qm}) with positive entries and of size $4\times 4$ such that (see formula \eqref{yamb-e8}) 
\begin{equation*}
{\rm tr}( {\bf K} {\bf K}^{\rm T}) = {\rm tr} \Bigl( \Bigl( \sum_{k\ge 1}  Q^k \Bigr) Q_m \Bigr) + \ldots.
\end{equation*}
The convergence of the matrix series $\sum_{k\ge 1}  Q^k$ is equivalent to the condition that the maximal eigenvalue of the matrix $Q$ is less then 1. This is exactly the condition (\ref{yamb-e14}).
\vskip 1 truecm

\begin{theo}\label{theo3}
Under condition (\ref{yamb-e14}), the following limit holds:
\begin{equation}\label{yamb-e13A}
\lim_{N\to\infty}\frac{1}{N}\log\,\Xi_N(\beta ,\mu )=\log\,\bfLambda.
\end{equation}
Moreover, as $N\to\infty$, the $N$-strip Gibbs measure ${\mathbb P}_N$ (see Eqn (\ref{yamb-Gd})) converges
weakly to a limiting probability distribution
${\mathbb P}$ that is represented by a positive recurrent Markov chain with states
$(\bft ,\bfsigma )$, the transition matrix\\ $\bfP =\{P ((\bft,\bfsigma ),(\bft',\bfsigma'))\}$
and the invariant distribution $\bfpi =\{\pi(\bft,\bfsigma )\}$ where
\begin{eqnarray*}
P((\bft,\bfsigma ),(\bft',\bfsigma')) &=& \frac{K((\bft,\bfsigma ),(\bft',\bfsigma'))\bfphi (\bft',\bfsigma')}{\bfLambda\bfphi(\bft,\bfsigma)} \\
\pi (\bft,\bfsigma ) &=& \bfphi(\bft,\bfsigma )\bfphi^{\rm T}(\bft,\bfsigma )
\left/\big\langle \bfphi ,\bfphi^{\rm T}
\big\rangle_{\rm T-C}\right.
\end{eqnarray*}
with the norm
$\big\|\bfphi\big\|^2_{\rm T-C}=\sum_{\bft,\bfsigma}\bfphi(\bft,\bfsigma )^2.
$
\end{theo}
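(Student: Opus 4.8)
The plan is to deduce Theorem \ref{theo3} from the spectral information already established in Theorem \ref{theo2}, exactly in parallel with how Theorem \ref{theo1} follows from Lemmas \ref{yamb-l1} and \ref{sz-l3}. By Lemma \ref{yamb-l2}, the condition \eqref{yamb-e14} implies \eqref{KK1} with $r=1$, so Theorem \ref{theo2} applies: $\bfK^2$ and $(\bfK^{\rm T})^2$ are trace-class, $\bfK$ and $\bfK^{\rm T}$ share a positive simple maximal eigenvalue $\bfLambda$ with strictly positive eigenvectors $\bfphi,\bfphi^{\rm T}$, and there is a spectral gap separating $\bfLambda$ from the rest of the spectrum (and $\bfLambda^2$ from the remaining singular values of $\bfK^2$). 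First I would record the spectral decomposition: writing $\bfK^N=\bfLambda^N\,\Pi+\bfR_N$ where $\Pi=|\bfphi\rangle\langle\bfphi^{\rm T}|/\langle\bfphi,\bfphi^{\rm T}\rangle_{\rm T-C}$ is the rank-one spectral projector onto the top eigenspace and $\bfR_N$ carries the rest, the gap gives $\|\bfR_N\|_{\rm T-C}\le C\,\rho^N$ for some $\rho<\bfLambda$. Since $\bfK^N$ is trace-class for $N\ge 2$ (Proposition \ref{KK}), one may take the operator trace and obtain $\mathrm{tr}\,\bfK^N=\bfLambda^N\big(1+O((\rho/\bfLambda)^N)\big)$, whence $\frac1N\log\Xi_N(\beta,\mu)=\frac1N\log\mathrm{tr}\,\bfK^N\to\log\bfLambda$, which is \eqref{yamb-e13A}.

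Next I would establish the weak convergence of the Gibbs measures. The measure $\mathbb P_N$ in \eqref{yamb-Gd} is, by construction, the distribution of the cyclic sequence $\big((\bft(0),\bfsigma(0)),\dots,(\bft(N-1),\bfsigma(N-1))\big)$ under the transfer-matrix weighting, i.e. a Markov-type measure with weights $K$ closed up into a loop and normalized by $\mathrm{tr}\,\bfK^N$. To identify the limit I would compute the finite-dimensional marginals: for a fixed window of $k$ consecutive strips, the $\mathbb P_N$-probability of a cylinder event is a ratio of the form $\langle \delta,\bfK^{a}\cdots\rangle/\mathrm{tr}\,\bfK^N$, and substituting the decomposition $\bfK^{N-k}=\bfLambda^{N-k}\Pi+\bfR_{N-k}$ and letting $N\to\infty$ collapses the long ``return'' segment onto the projector $\Pi$. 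This yields that the marginal converges to the corresponding marginal of the stationary Markov chain with transition matrix $P((\bft,\bfsigma),(\bft',\bfsigma'))=K((\bft,\bfsigma),(\bft',\bfsigma'))\bfphi(\bft',\bfsigma')/(\bfLambda\bfphi(\bft,\bfsigma))$ and initial (invariant) law $\pi(\bft,\bfsigma)=\bfphi(\bft,\bfsigma)\bfphi^{\rm T}(\bft,\bfsigma)/\langle\bfphi,\bfphi^{\rm T}\rangle_{\rm T-C}$; positivity of $\bfphi,\bfphi^{\rm T}$ makes $P$ a genuine stochastic matrix and $\pi$ a probability vector, and the eigenvector identities $\sum_{(\bft',\bfsigma')}K((\bft,\bfsigma),(\bft',\bfsigma'))\bfphi(\bft',\bfsigma')=\bfLambda\bfphi(\bft,\bfsigma)$ and its transpose give $\pi P=\pi$. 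Irreducibility of $P$ (inherited from the strict positivity of $K^{(2)}$ in \eqref{positivity}) together with summability of $\pi$ gives positive recurrence. Convergence of all finite-dimensional distributions is the definition of weak convergence on the (countable, discrete) configuration space, so this completes the argument.

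I expect the main obstacle to be purely technical bookkeeping rather than conceptual: one must handle the cyclic (periodic) boundary condition carefully, since the trace $\mathrm{tr}\,\bfK^N$ is not literally a scalar product $\langle\psi',\bfK^N\psi''\rangle$ of fixed vectors but a sum over the diagonal, so the asymptotics $\mathrm{tr}\,\bfK^N\sim\bfLambda^N$ must be extracted from trace-class perturbation theory (the gap estimate on $\bfR_N$ in trace norm, not just operator norm) rather than from the vector-pairing statement in Lemma \ref{sz-l3}(4). A secondary point requiring care is that $\bfK$ itself need only be Hilbert–Schmidt (not trace-class), so all trace manipulations must be performed with even powers $\bfK^N$, $N\ge 2$; since the limit $N\to\infty$ is taken this causes no real difficulty. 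Once these points are dispatched, the conclusion follows from Theorem \ref{theo2} and the Krein–Rutman machinery exactly as in the pure-CDT case, and I would phrase the final proof as: ``The proof is a consequence of Theorem \ref{theo2}, Lemma \ref{yamb-l2} and the Krein–Rutman theory \cite{Krein-Rutman}, following the same lines as the proof of Theorem \ref{theo1}.''
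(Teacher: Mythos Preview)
Your proposal is correct and follows essentially the same route as the paper's own proof: invoke Lemma~\ref{yamb-l2} to obtain the hypotheses of Theorem~\ref{theo2}, then use the spectral gap to split $\bfK^N$ into its rank-one leading part $\bfLambda^N\Pi$ and an exponentially subdominant remainder, read off the trace asymptotics \eqref{yamb-e13A}, and deduce the Gibbs-measure convergence as a corollary. The paper's argument is in fact terser than yours---it records the gap as $\|\bfPi\bfK\bfPi\|_{\rm T-C}/\bfLambda<1$ for the complementary projection $\bfPi$ and states that the trace and measure limits follow---whereas you spell out the trace-norm versus operator-norm issue, the need for $N\ge 2$, and the verification that $(P,\pi)$ is stochastic, invariant and positive recurrent; these are exactly the details the paper leaves implicit.
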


\noindent
{\bf Proof of Theorem \ref{theo3}.} The spectral gap for $\bfK$ implies that
$\forall$ $\bfpsi\in\ell^2_{\rm T-C}$, we have the convergence
$$\lim_{N\to\infty}\frac{1}{\bfLambda^N}\bfK^N\bfpsi =\left(\langle\bfpsi ,\bfphi\rangle_{\rm T-C}
\right)\bfphi$$
in the norm of space $\ell^2_{\rm T-C}$. Moreover, let $\bfPi$ denote the
operator of projection to the subspace spanned by the eigenvectors of
$\bfK$ different from $\bfphi$. Then
$$\diy\frac{1}{\bfLambda}\|\bfPi\bfK\bfP\|_{\rm T-C}<1 \ \ \Longrightarrow \ \ \diy\lim_{N\to\infty}\frac{1}{\bfLambda^N}\left\|
\big(\bfPi\bfK\bfP\big)^N\right\|_{\rm T-C}=0.$$
In turn, this implies that
$$\frac{1}{N}\log\;\Xi_N(\mu,\beta )=\frac{1}{N}\log\;{\rm tr}_{\rm T-C}\bfK^N
\to\log\bfLambda .$$
Convergence of the Gibbs measure ${\mathbb P}_N$ follows as a corollary.
 $\Box$
\vskip 1 truecm

\begin{figure}
\begin{center}
\includegraphics[width=11cm]{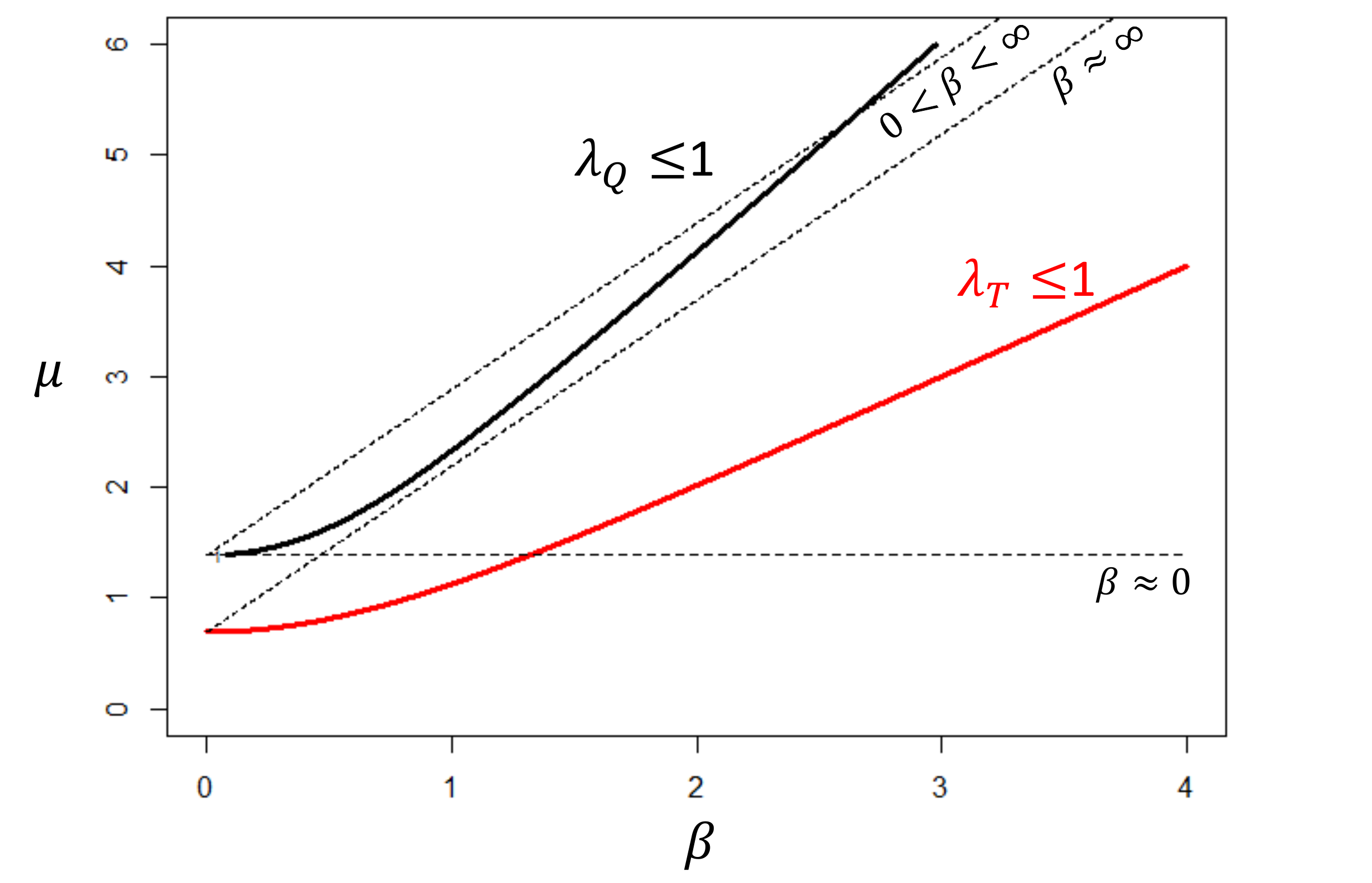}
 \end{center}
  \caption{$\lambda_Q=\lambda$ and $\lambda_T$ are the maximal eigenvalues of 
the matrix $Q$ and a related matrix $T$ respectively (see Appendix \ref{sec:app}). The area 
above the black curve is where the condition
 (\ref{yamb-e14}) holds true.}
  \label{fig0}
\end{figure}



\section{Concluding remarks} \label{sec5}

This paper makes a step towards determining the subcriticality
domain for an Ising-type model coupled to two-dimensional causal dynamical triangulations (CDT). In doing so we employ transfer-matrix techniques and in particular the Krein-Rutman theorem. We complement the discussion of the previous sections with the following two concluding remarks:

\begin{remark}
It is instructive to summarise the logical structure
of the argument establishing Lemma \ref{yamb-l2} and Theorems \ref{theo2}
and \ref{theo3}:
\begin{itemize}
\item First, \eqref{KK3} holds iff condition \eqref{yamb-e14} holds: see
the proof of Lemma \ref{yamb-l2}. 
\item Next, \eqref{KK3} implies that $\bfK$ is a Hilbert--Schmidt operator 
and $\bfK^2$ is a trace class operator in
$\ell^2_{\rm T-C}$. 
\item 
The last fact,
together with the property of positivity \eqref{positivity}, allow us to use
the Krein--Rutman theory, deriving all conclusions of Theorems
\ref{theo2} and \ref{theo3}.
\end{itemize}

On the other hand, if \eqref{yamb-e14} fails (and therefore
\eqref{KK3} fails), it does
not necessarily mean that the assertions
Theorems \ref{theo2} and \ref{theo3} fail. In other words, we do not claim that
the boundary of the domain of parameters $\beta$ and $\mu$ where the model
exhibits subcritical behavior is given by  Eqn.\
\eqref{yamb-e14}.  Moreover, Figure \ref{fig0} shows the result of a numerical calculation  
indicating that the condition \eqref{yamb-e14} 
is worse than \eqref{suffco} for (moderately) large values of $\beta$.

An apparent condition closer to necessity is the pair of inequalities
\eqref{KK1} for some (possibly) large $r$. This issue needs a further study.

\end{remark}

\begin{remark}
Physical considerations suggest that
the critical curve in the $(\beta, g)$ quarter-plane would have some predictable
patterns of behavior: as a function of $\beta$, it would decay and exhibit a first-order
singularity at a unique point $\beta =\beta_{\rm{cr}}\in (0,\infty )$.

A plausible conjecture is that the boundary of the critical domain coincides with
the locus of points $(\beta ,\mu )$ where $\bf\Lambda$ looses either
the property of positivity or the property of being a simple eigenvalue. This direction 
also requires further research.
\end{remark}


\subsection*{Acknowledgements.} This work was supported by FAPESP 2012/04372-7.  
JCH acknowledges support by CAPES. YS would like to thank the FAPESP foundation for the financial support and NUMEC, IME University of S\~ao Paulo, for
warm hospitality. The work of AY was partially supported CNPq 308510/2010-0. The work of SZ was partially supported by
FAPERJ 111.859/2012, CNPq 307700/2012-7 and PUC-Rio. Further, he thanks the IME at the University of S\~ao Paulo, as well as the Rudolf Peierls Centre for Theoretical Physics and Mansfield College, University of Oxford for kind hospitality and financial support during visits.

\appendix

\section{ Proof of Lemma \ref{yamb-l2}.} \label{sec:app}
By definition the trace \eqref{KK3} we need to calculate the series
\begin{eqnarray}\label{ntr}
{\rm tr}({\bf K}^{\rm T} {\bf K}) &=& \sum_{(\bft,\bfsigma )} {\bf K}^{\rm T} {\bf K} ((\bft,\bfsigma ),(\bft,\bfsigma)) \nonumber \\
&=& \sum_{(\bft,\bfsigma ), (\bft',\bfsigma')} K ((\bft,\bfsigma ), (\bft',\bfsigma')) K ((\bft,\bfsigma ), (\bft',\bfsigma')) \nonumber \\
&=& \sum_{(\bft,\bfsigma ), (\bft',\bfsigma')} K^2 ((\bft,\bfsigma ), (\bft',\bfsigma')).
\end{eqnarray}

A single-strip triangulation $\bft$
consists of up- and down-triangles. Accordingly, it is convenient
to employ new labels for spins: if a triangle $t(l)$ is an $l$th up-triangle
then we denote it by $t^l_{\rm up}$; the corresponding spin $\sigma (j)$
will be denoted by $\sigma^l_{\rm up}$. Similarly,
if $t(j)$ is an $l$th down-triangle then we denote it by $t^l_{\rm do}$;
the spin $\sigma (j)$ will be
denoted by $\sigma^l_{\rm do}$. Consequently, the triangulation $\bft$
and its supported spin-configuration $\bfsigma$ are
represented as
$$\bft :=(\bft_{\rm up},\bft_{\rm do})\;\hbox{ and }\;
\bfsigma :=(\bfsigma_{\rm up},\bfsigma_{\rm do}).$$
Here
$$\bft_{\rm up}=(t^1_{\rm up},\ldots, t^n_{\rm up}),\;
\;\bft_{\rm do}=(t^1_{\rm do},\ldots ,t^m_{\rm do}),$$
and
$$\bfsigma_{\rm up}=(\sigma^1_{\rm up},\ldots, \sigma^n_{\rm up}),\;
\;\bfsigma_{\rm do}=(\sigma^1_{\rm do},\ldots ,\sigma^m_{\rm do}),$$
assuming that the supporting single-strip triangulation $\bft$ contains $n$
up-triangles and $m$ down-triangles. (The actual order of up- and down-triangles
and supported spins does not matter.)

The same can be done for the pair $(\bft',\bfsigma')$ as illustrated in
(\ref{ntr}).
Let recall that the triangulations $\bft$ and $\bft'$ are consistent ($\bft\sim\bft'$)
iff number of the down-triangles in $\bft$ equals that of up-triangles
in $\bft'$.

\begin{figure}
\begin{center}
\includegraphics[width=11cm]{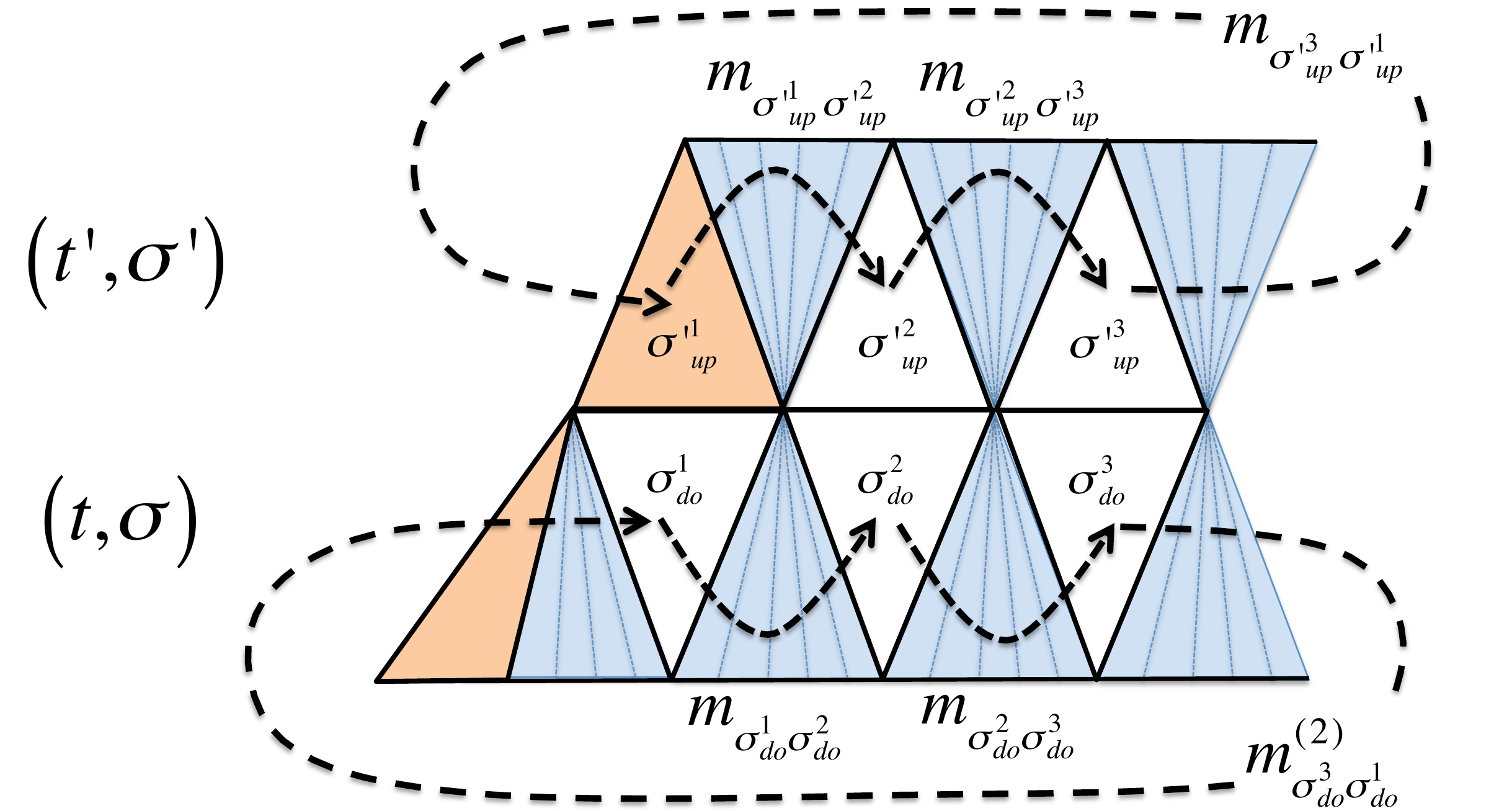}
   \end{center}
  \caption{ Interaction  betwen the elements  $z_i$, $\eta_i$ }
  \label{fig14}
\end{figure}

To calculate the sum (\ref{ntr}) we divide the summation over
$(\bft',\bfsigma')$ into a summation over $(\bft'_{\rm up},\bfsigma'_{\rm up})$
and $(\bft'_{\rm do},\bfsigma'_{\rm do})$. Firstly, fix a pair
$(\bft'_{\rm up},\bfsigma'_{\rm up})$ and make the sum over $(\bft'_{\rm do},
\bfsigma'_{\rm do})$.  Note that the term $V((\bft,\bfsigma),(\bft',\bfsigma'))$
depends only on $\bfsigma_{\rm do}$ and $\bfsigma'_{\rm up}$. Consequently,

\begin{eqnarray}\label{yamb-e3A}
&& \sum_{\bft'_{\rm do},\bfsigma'_{\rm do}}K^2 ((\bft,\bfsigma),(\bft',\bfsigma'))\\
&& {} = e^{-\beta H(\bfsigma)} e^{-2\beta V((\bft,\bfsigma ),(\bft',\bfsigma')) }
e^{-\mu n(\bft)} \sum_{(\bft'_{\rm do},\bfsigma'_{\rm do})}
e^{-\beta H(\bfsigma')} e^{-\mu n(\bft')}. \nonumber
\end{eqnarray}

\noindent
The sum in the right-hand side of (\ref{yamb-e3A}) can be represented
in a matrix form. Denote by
$e_{\pm 1}$ the standard spin-$1/2$ unit vectors in ${\mathbb R}^2$:
$$e_{+1}=\begin{pmatrix}1\\ 0\end{pmatrix}\;\;\hbox{ and }\;\;
e_{-1}=\begin{pmatrix}0\\ 1\end{pmatrix}\,.$$
Next, let us introduce a $2\times 2$ matrix $T$ where
\begin{equation}\label{yamb-e4}
T = e^{-\mu}\begin{pmatrix} e^{\beta} & e^{-\beta} \\
e^{-\beta} & e^{\beta}\end{pmatrix} := \begin{pmatrix} t_{++} & t_{+-} \\
t_{-+} & t_{--}\end{pmatrix}.
\end{equation}
Denote by $n(i), i=1,\dots, n_{up}(\bft' )$ the number of down-triangles in $\bft'$ which are between the $i$th and $(i+1)$th up-triangles in $\bft'$. Let $n_{up}(\bft' )=k$ then
\begin{eqnarray}\label{yamb-e3.1}
\nonumber 
&& \sum\limits_{\bft'_{\rm do},\bfsigma'_{\rm do}}e^{-\beta H(\bfsigma')}
e^{-\mu n(\bft')} = \sum_{n(i)\ge 0:\  \sum_i n(i) \ge 1}
\prod\limits_{l=1}^k
\left(e^{\rm T}_{{\sigma'}^l_{\rm up}} T^{n(l)+1}
e_{{\sigma'}^{l+1}_{\rm up}}\right)\\
&& {} = \prod\limits_{l=1}^k\left(e^{\rm T}_{{\sigma'}^l_{\rm up}} M
e_{{\sigma'}^{l+1}_{\rm up}}\right) - \prod_{l=1}^k  \left(e^{\rm T}_{{\sigma'}^l_{\rm up}} T e_{{\sigma'}^{l+1}_{\rm up}}\right)
\end{eqnarray}
where the matrix $M$ is the sum of the geometric progression
\begin{equation}\label{yamb-e6}
M = \sum_{n=1}^{\infty} T^n
:=\begin{pmatrix}m_{++} & m_{+-} \\ m_{-+} & m_{--} \end{pmatrix}\,.
\end{equation}
Using the same procedure we can obtain the sum over all up-triangles into the triangulation $\bft$. The only difference is the existence of marked up-triangle in the strip: let as before $n_{up}(\bft' )=n_{do}(\bft)=k$ then
\begin{equation}\label{yamb-e3.2}
\sum\limits_{\bft_{\rm up},\bfsigma_{\rm up}}e^{-\beta H(\bfsigma)}
e^{-\mu n(\bft)} = \prod\limits_{l=1}^{k-1} \left(e^{\rm T}_{{\sigma}^l_{\rm up}} M
e_{{\sigma}^{l+1}_{\rm up}}\right) \left(e^{\rm T}_{{\sigma}^{k}_{\rm up}} M^2 e_{{\sigma}^{1}_{\rm up}}\right)
\end{equation}
See Figure~\ref{fig14} for illustration of these calculations \eqref{yamb-e3.1} and \eqref{yamb-e3.2}. Further,  supposing the existence of the matrix $M$ and using (\ref{yamb-e3.1}) and (\ref{yamb-e3.2}) we obtain the following:
\begin{eqnarray}
\nonumber && \sum_{\bft_{\rm up},\bfsigma_{\rm up}} \sum_{\bft'_{\rm do},\bfsigma'_{\rm do}} K^2((\bft,\bfsigma),(\bft',\bfsigma')) = e^{-2\beta V((\bft_{\rm do},\bfsigma_{\rm do}),  (\bft'_{\rm up},\bfsigma'_{\rm up}) ) }  \\ \nonumber
&& \ \ \  \times \sum_{\bft_{\rm up},\bfsigma_{\rm up}} e^{-\beta H(\bfsigma)} 
e^{-\mu n(\bft)} \sum_{(\bft'_{\rm do},\bfsigma'_{\rm do})}
e^{-\beta H(\bfsigma')} e^{-\mu n(\bft')} \\ 
&& {} = e^{-2\beta V((\bft_{\rm do},\bfsigma_{\rm do}),  (\bft'_{\rm up},\bfsigma'_{\rm up}) ) }  \nonumber \\ && {} \times \Bigl[ \prod\limits_{l=1}^k\left(e^{\rm T}_{{\sigma'}^l_{\rm up}} M
e_{{\sigma'}^{l+1}_{\rm up}}\right)  \prod\limits_{l=1}^{k-1} \left(e^{\rm T}_{{\sigma}^l_{\rm do}} M e_{{\sigma}^{l+1}_{\rm do}}\right) \left(e^{\rm T}_{{\sigma}^{k}_{\rm up}} M^2 e_{{\sigma}^{1}_{\rm up}}\right) \nonumber \\
&& \ \ \ {} - \prod\limits_{l=1}^k\left(e^{\rm T}_{{\sigma'}^l_{\rm up}} T
e_{{\sigma'}^{l+1}_{\rm up}}\right) \prod_{l=1}^{k-1}  \left(e^{\rm T}_{{\sigma}^l_{\rm do}} M e_{{\sigma}^{l+1}_{\rm do}}\right) \left(e^{\rm T}_{{\sigma}^{k}_{\rm up}} M^2 e_{{\sigma}^{1}_{\rm up}}\right)  \Bigr] .\label{yamb-e3B}
\end{eqnarray}

A necessary and sufficient condition for the convergence of the matrix series
for $M$ is that the maximal eigenvalue of matrix $T$ is less then 1.
The eigenvalues of $T$ are
\begin{equation}\label{yamb-eigen1}
\lambda_\pm = e^{(\beta -\mu )} \pm e^{-(\beta +\mu )},
\end{equation}
and the above condition means that $\lambda_+<1$
or, equivalently,
\begin{equation}\label{yamb-cond1}
\mu > \ln\bigl( 2{\rm cosh}(\beta)\bigr).
\end{equation}
Under this condition \eqref{yamb-cond1}, the matrix $M$ is calculated explicitly:
\begin{equation}\label{yamb-e5}
\begin{array}{l}
\diy  M=\frac{e^{(\beta-\mu )}}{e^{2\beta}(1-e^{(\beta-\mu)})^2- e^{-2\mu} }\\
\;\\
\diy\qquad\times\begin{pmatrix}e^{2\beta}+(1-e^{4\beta})e^{-(\beta+\mu )}&1\\
1&e^{2\beta}+(1-e^{4\beta})e^{-(\beta+\mu )}\end{pmatrix}\,.\end{array}
\end{equation}

We are now in a position to calculate the sum in \eqref{ntr}.
To this end,  we again represent it through the product of transfer matrices. Pictorially, we
express the above sum as the partition function of a one-dimensional Ising-type
model where
states are pairs of spins $(\sigma^l_{\rm do}, \sigma^l_{\rm up})$ and the
interaction is via the matrix $T$ between the members of the pair and via matrix $M$
between neighboring pairs. More precisely, define the following $4\times 4$
matrices:
\begin{eqnarray}\label{q}
\!\!\!\!\!\!\!\! Q \!\!\!\!\!&=&\!\!\! \!\! \begin{pmatrix} e^{2\beta} m_{++}m_{++} & m_{++}m_{+-} & m_{+-}m_{++} & e^{2\beta}m_{+-}m_{+-} \\ m_{++}m_{-+} & e^{-2\beta}m_{++}m_{--} & e^{-2\beta}m_{+-}m_{-+} & m_{+-}m_{--} \\ m_{-+}m_{++} & e^{-2\beta}m_{-+}m_{+-} & e^{-2\beta}m_{--}m_{++} & m_{--}m_{+-} \\ e^{2\beta}m_{-+}m_{-+} & m_{-+}m_{--} & m_{--}m_{-+} & e^{2\beta}m_{--}m_{--} \end{pmatrix} \\
\label{qm}
\!\!\!\!\!\!\!\! Q_m\!\!\!\!\! &\!\!\!=\!\!\!& \!\!\! \!\! \begin{pmatrix} e^{2\beta} m_{++}m_{++}^{(2)} & m_{++}m_{+-}^{(2)} & m_{+-}m_{++}^{(2)} & e^{2\beta}m_{+-}m_{+-}^{(2)} \\ m_{++}m_{-+}^{(2)}  & e^{-2\beta}m_{++}m_{++}^{(2)}  & e^{-2\beta}m_{+-}m_{-+}^{(2)} & m_{+-}m_{--}^{(2)}\\ m_{-+}m_{++}^{(2)} & e^{-2\beta}m_{-+}m_{+-}^{(2)} & e^{-2\beta}m_{--}m_{++}^{(2)} & m_{++}m_{++}^{(2)}\\ e^{2\beta}m_{-+}m_{-+}^{(2)} & m_{-+}m_{--}^{(2)} & m_{--}m_{-+}^{(2)} & e^{2\beta}m_{--}m_{--}^{(2)} \end{pmatrix} \\
\label{qt}
\!\!\!\!\!\!\!\! Q_t\!\!\!\!\! &\!\!\!=\!\!\!&\!\!\!\!  \!\!  \begin{pmatrix} e^{2\beta} t_{++}m_{++} & t_{++}m_{+-} & t_{+-}m_{++} & e^{2\beta}t_{+-}m_{+-} \\ t_{++}m_{-+} & e^{-2\beta}t_{++}m_{--} & e^{-2\beta}t_{+-}m_{-+} & t_{+-}m_{--} \\ t_{-+}m_{++} & e^{-2\beta}t_{-+}m_{+-} & e^{-2\beta}t_{--}m_{++} & t_{--}m_{+-} \\ e^{2\beta}t_{-+}m_{-+} & t_{-+}m_{--} & t_{--}m_{-+} & e^{2\beta}t_{--}m_{--} \end{pmatrix} \\
\label{qtm}
\!\!\!\!\!\!\!\! Q_{tm}\!\!\!\! &\!\!\!=\!\!\!&\!\!\!\!  \!\! \begin{pmatrix} e^{2\beta} t_{++}m_{++}^{(2)} & t_{++}m_{+-}^{(2)} & t_{+-}m_{++}^{(2)} & e^{2\beta}t_{+-}m_{+-}^{(2)} \\ t_{++}m_{-+}^{(2)}  & e^{-2\beta}t_{++}m_{++}^{(2)}  & e^{-2\beta}t_{+-}m_{-+}^{(2)} & t_{+-}m_{--}^{(2)}\\ t_{-+}m_{++}^{(2)} & e^{-2\beta}t_{-+}m_{+-}^{(2)} & e^{-2\beta}t_{--}m_{++}^{(2)} & t_{++}m_{++}^{(2)}\\ e^{2\beta}t_{-+}m_{-+}^{(2)} & t_{-+}m_{--}^{(2)} & t_{--}m_{-+}^{(2)} & e^{2\beta}t_{--}m_{--}^{(2)} \end{pmatrix}
\end{eqnarray}
where $m_{ij}, m^{(2)}_{ij}$ and $t_{i,j} \ (i,j \in \{-, +\})$ are elements of the matrices $M, M^2,$ and $T$ respectively.

Now for the sum under consideration (\ref{ntr}) we obtain using representation (\ref{yamb-e3B})
\begin{eqnarray}\label{yamb-e8}
    && \nonumber \sum_{(\bft, \bfsigma), (\bft',\bfsigma')} K^2((\bft,\bfsigma),(\bft',\bfsigma'))  = 
\sum_{(\bft_{\rm do},\bfsigma_{\rm do}),  (\bft'_{\rm up},\bfsigma'_{\rm up})}  e^{-2\beta V((\bft_{\rm do},\bfsigma_{\rm do}),  (\bft'_{\rm up},\bfsigma'_{\rm up}) ) }   \\
&& \ \ \ {} \times \left[ \prod\limits_{l=1}^k\left(e^{\rm T}_{{\sigma'}^l_{\rm up}} M
e_{{\sigma'}^{l+1}_{\rm up}}\right)  \prod\limits_{l=1}^{k-1} \left(e^{\rm T}_{{\sigma}^l_{\rm do}} M e_{{\sigma}^{l+1}_{\rm do}}\right) \left(e^{\rm T}_{{\sigma}^{k}_{\rm up}} M^2 e_{{\sigma}^{1}_{\rm up}}\right)  \right. \nonumber \\
&& \ \ \ \ \ \ \left. {} - \prod\limits_{l=1}^k\left(e^{\rm T}_{{\sigma'}^l_{\rm up}} T
e_{{\sigma'}^{l+1}_{\rm up}}\right) \prod_{l=1}^{k-1}  \left(e^{\rm T}_{{\sigma}^l_{\rm do}} M e_{{\sigma}^{l+1}_{\rm do}}\right) \left(e^{\rm T}_{{\sigma}^{k}_{\rm up}} M^2 e_{{\sigma}^{1}_{\rm up}}\right)  \right]  \nonumber \\
&& {} = 
    \mbox{tr} \Bigl( \Bigl(\sum_{k=0}^\infty Q^k \Bigr) Q_m \Bigr) - \mbox{tr} \Bigl( \Bigl( \sum_{k=1}^\infty Q_t^k  \Bigr) Q_{tm} \Bigr).
\end{eqnarray}
By the construction the matrix $Q$ is greater then $Q_t$ elementwise. Thus the eigenvalue of matrix $Q$ is greater than the eigenvalue of the matrix $Q_t$ (it follows from the Perron-Frobenius theorem).  Therefore the necessary and sufficient condition for the convergence in (\ref{ntr}) is
that the largest eigenvalue of $Q$ is less than 1. It is possible to calculate its eigenvalue analytically. In order to express
the eigenvalues of $Q$ it is convenient to use notations \eqref{ccccc} and \eqref{mmmmmm}. In this notations the matrix $M$, i.e.\ \eqref{yamb-e5}, is represented as follows
$$
M=c\left( \begin{array}{cc} m & 1 \\ 1 & m \end{array} \right).
$$
The equations for the eigenvalues of $Q$ are:
\begin{eqnarray*}
\lambda_1 &=& c^2 e^\beta (m^2 -1) \\
\lambda_2 &=& c^2 e^{-\beta} (m^2 -1) \\
\lambda_3 &=& c^2 (m^2+1)({\rm cosh}\,\beta )
\left( 1- \sqrt{1-\frac{(m^2 -1)^2}{({\rm cosh}\,\beta)^2(m^2 + 1)^2}}\right) \\
\lambda_4 &=& c^2 (m^2+1)({\rm cosh}\,\beta )
\left( 1+ \sqrt{1-\frac{(m^2 -1)^2}{({\rm cosh}\,\beta)^2(m^2 + 1)^2}}\right)
\end{eqnarray*}
A straightforward inspection confirms that the largest eigenvalue
is given by $\lambda_4$. The condition $\lambda_4<1$ coincides with
\eqref{ntr}. This completes the proof of Lemma \ref{yamb-l2}.
$\Box$

\providecommand{\href}[2]{#2}\begingroup\raggedright\endgroup

\end{document}